\let\doendproof\endproof
\renewcommand\endproof{~\hfill$\square$\doendproof}
\definecolor {sepia} {hsb} {5,0.75,0.65}
\renewcommand{\paragraph}[1]{\medskip\noindent\textbf{#1.}}
\title{Strict Confluent Drawing}
\author
{
  David Eppstein\inst{1}
  \and Danny Holten\inst{2}
  \and Maarten L\"offler\inst{3},
  \\ Martin N\"ollenburg\inst{4}
  \and Bettina Speckmann\inst{5}
  \and Kevin Verbeek\inst{6}
  }
\institute{Computer Science Department, University of California, Irvine, USA, \email{eppstein@uci.edu}
\and Synerscope BV, Eindhoven, the Netherlands, \email{danny.holten@synerscope.com}
\and Department of Computing and Information Sciences, Utrecht University, the Netherlands, \email{m.loffler@uu.nl}
\and Institute of Theoretical Informatics, Karlsruhe Institute of Technology, Germany, \email{noellenburg@kit.edu}
\and Department of Mathematics and Computer Science, Technical University Eindhoven, the Netherlands, \email{speckman@win.tue.nl}
\and Department of Computer Science, University of California, Santa Barbara, USA, \email{kverbeek@cs.ucsb.edu}
}
\begin{document}

\maketitle

\begin{abstract}
We define \emph{strict confluent drawing}, a form of confluent drawing in which the existence of an edge is indicated by the presence of a smooth path through a system of arcs and junctions (without crossings), and in which such a path, if it exists, must be unique. We prove that it is NP-complete to determine whether a given graph has a strict confluent drawing but polynomial to determine whether it has an \emph{outerplanar} strict confluent drawing with a fixed vertex ordering (a drawing within a disk, with the vertices placed in a given order on the boundary).
\end{abstract}

\section {Introduction}

Confluent drawing is a style of graph drawing in which edges are not drawn explicitly; instead vertex adjacency is indicated by the existence of a smooth path through a system of arcs and junctions that resemble train tracks. These types of drawings allow even very dense graphs, such as complete graphs and complete bipartite graphs, to be drawn in a planar way~\cite{degm-cd-05}.
Since its introduction, there has been much subsequent work on confluent drawing~\cite{EppGooMen-Alg-07,egm-dcd-06,EppSim-GD-11,hmr-becgcd-06,hss-ttcd-04,qa-cdard-10}, but the complexity of confluent drawing has remained unclear: how difficult is it to determine whether a given graph has a confluent drawing?
Confluent drawings have a certain visual similarity to a graph drawing technique called \emph {edge bundling}~\cite {Cui2008,Dwyer2007,Holten2006,Holten2009,Hurter2012}, in which ``similar'' edges are routed together in ``bundles'',
but we note that these drawings should be interpreted differently. In particular, sets of edges bundled together form visual junctions, however, interpreting them as confluent junctions can create false adjacencies.

Formally, a confluent drawing may be defined as a collection of \emph {vertices}, \emph {junctions} and \emph {arcs} in the plane, such that all arcs are smooth and start and end at either a junction or a vertex, such that arcs intersect only at their endpoints, and such that  all arcs that meet at a junction share the same tangent line there. A confluent drawing $D$ represents a graph $G$ defined as follows: the vertices of $G$ are the vertices of $D$, and there is an edge between two vertices $u$ and $v$ if and only if there exists a smooth path in $D$ from $u$ to $v$ that does not pass any other vertex. (In some variants of confluent drawing an additional restriction is made that the smooth path may not intersect itself~\cite{hss-ttcd-04}; however, this constraint is not relevant for our work.)

\begin{wrapfigure}[9]{r}{0.29\textwidth}
    \vspace{-\baselineskip}
		\vspace{-5ex}
    \centering
    \subfigure[\label{fig:mult-adj}]{
		\includegraphics[scale=1]{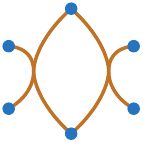}}
		\hfil
		\subfigure[\label{fig:self-loop}]{
		\includegraphics[scale=1]{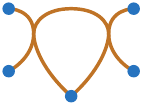}}
	\caption{%
		  (a) A drawing with a duplicate path.
		  (b)~A drawing with a self-loop.}
	\label{fig:mult-adj+self-loop}
\end{wrapfigure}
\paragraph {Contribution}
In this paper we introduce a subclass of confluent drawings, which we call \emph {strict} confluent drawings. Strict confluent drawings are confluent drawings with the additional restrictions that between any pair of vertices there can be \emph {at most one} smooth path, and there cannot be any paths from a vertex to itself.
Figure~\ref {fig:mult-adj+self-loop} illustrates the forbidden configurations. To avoid irrelevant components in the drawing, we also require all arcs of the drawing to be part of at least one smooth path representing an edge.
We believe that these restrictions may make strict drawings easier to read, by reducing the ambiguity caused by the existence of multiple paths between vertices. In addition, as we show, the assumption of strictness allows us to completely characterize their complexity, the first such characterization for any form of confluence on arbitrary undirected graphs.

We prove the following:
\begin{itemize}
\item It is NP-complete to determine whether a given graph has a strict confluent drawing.
\item For a given graph, with a given cyclic ordering of its vertices, there is a polynomial time algorithm to find an \emph{outerplanar} strict confluent drawing, if it exists: this is a drawing in a disk, with the vertices in the given order on the boundary of the disk
\item When a graph has an outerplanar strict confluent drawing, an algorithm based on circle packing can construct a layout of the drawing in which every arc is drawn using at most two circular arcs.
\end{itemize}

\begin{figure}[b]
	\centering
		\subfigure[\label{fig:gd2011contest}]{%
		\includegraphics[scale=0.33]{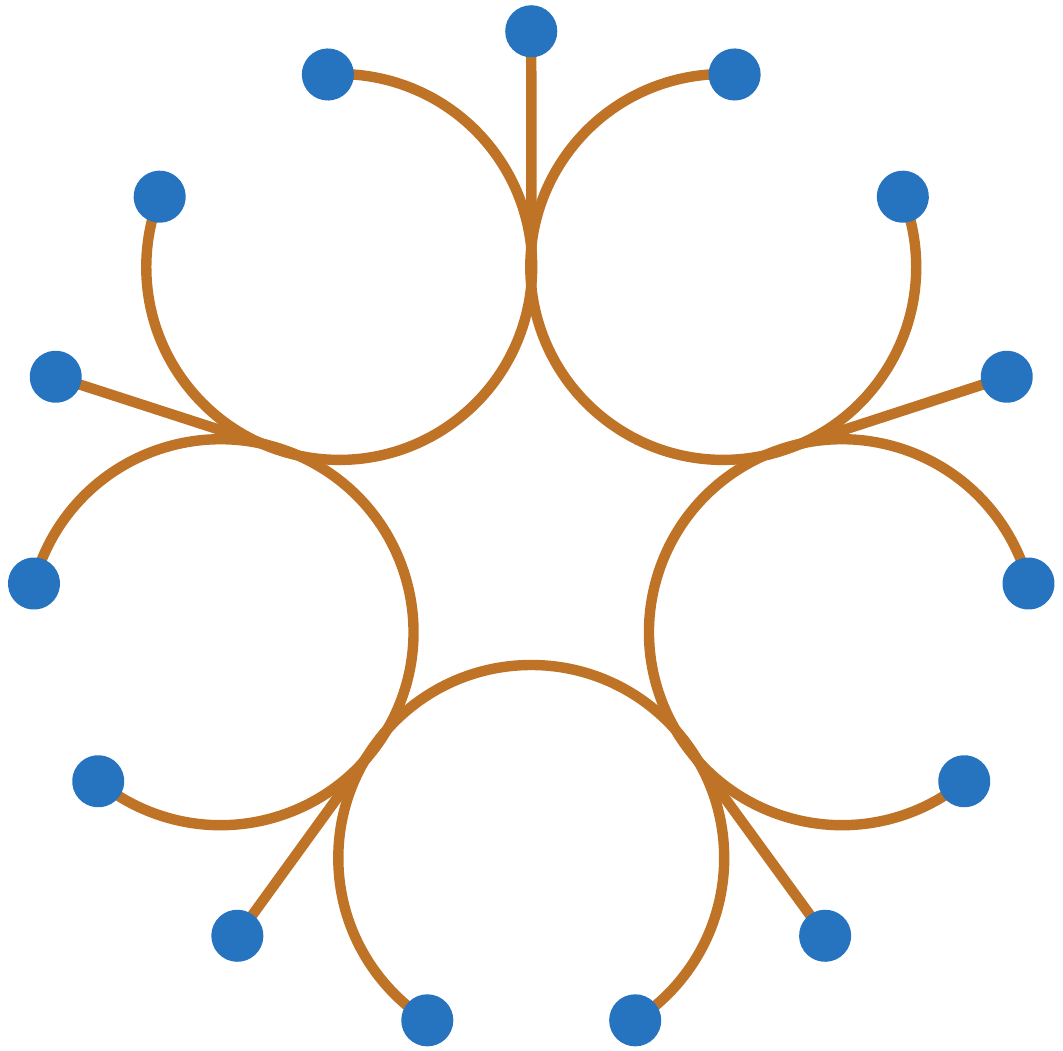}}
		\hfil
		\subfigure[\label{fig:5wheel-new}]{
		\includegraphics[page=2]{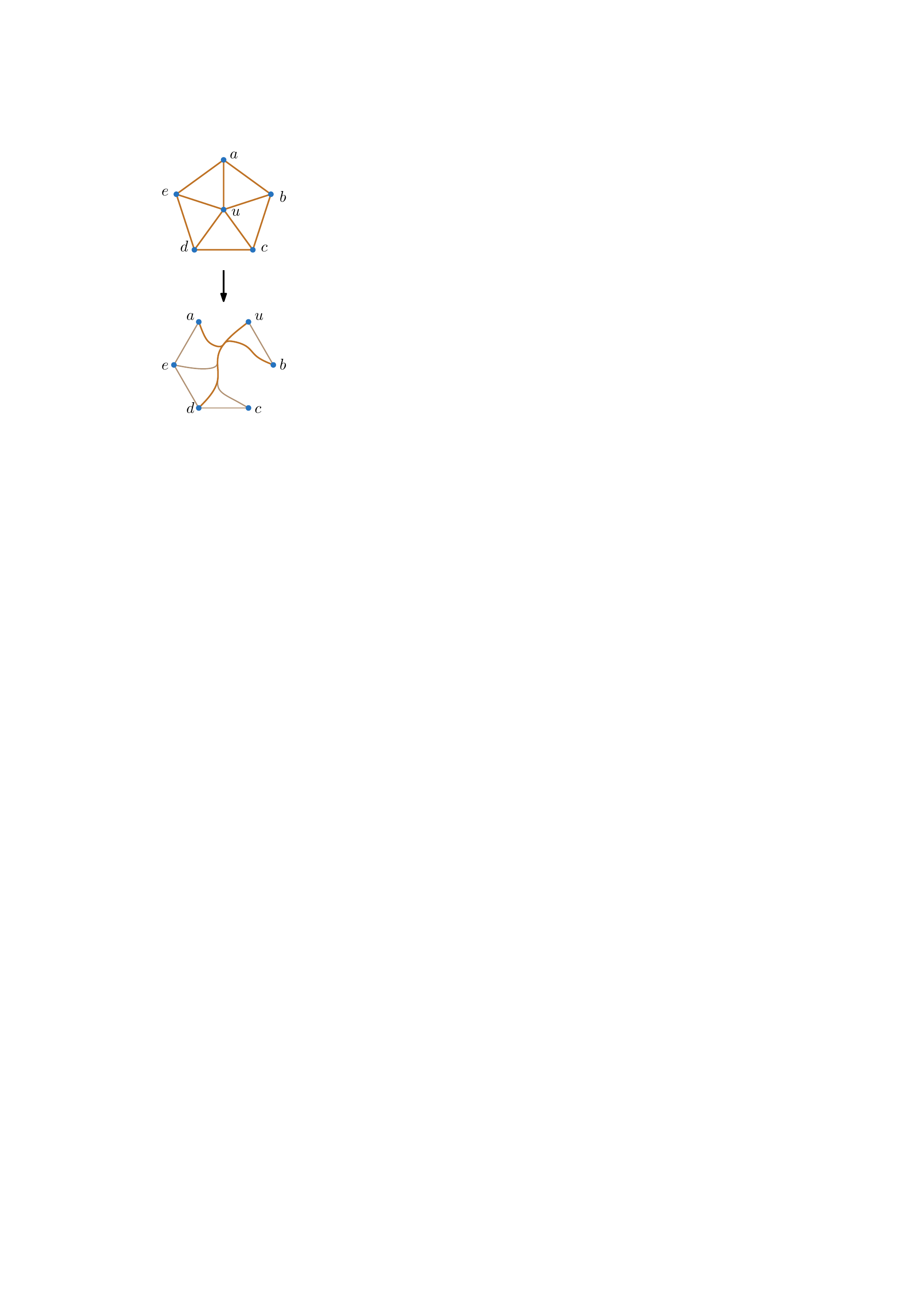}}
	\caption{(a) Outerplanar strict confluent drawing of the GD2011 contest graph. (b) A graph with no outerplanar strict confluent drawing.}
	\label{fig:sea-creature-5wheel-new}
\end{figure}

See Fig.~\ref{fig:gd2011contest} for an example of an outerplanar strict confluent drawing.  Previous work on  \emph{tree-confluent}~\cite{hss-ttcd-04} and \emph{delta-confluent drawings}~\cite{egm-dcd-06} characterized special cases of outerplanar strict confluent drawings as being the chordal bipartite graphs and distance-hereditary graphs respectively, so these graphs as well as the outerplanar graphs are all outerplanar strict confluent. The six-vertex wheel graph in Fig.~\ref{fig:5wheel-new} provides an example of a graph that does not have an outerplanar strict confluent drawing. (The central vertex $u$ needs to be placed between two of the outer vertices, say, $a$ and $b$. The smooth path from $u$ to the opposite vertex $d$ separates $a$ and $b$, so there must be a junction shared by the $u$--$d$ and $a$--$b$ paths, creating a wrong adjacency with $d$.)

\section {Preliminaries}

Let $G = (V,E)$ be a graph.
We call an edge $e$ in a drawing $D$ \emph {direct} if it consists only of a single arc (that does not pass through junctions).
We call the angle between two consecutive arcs at a junction or vertex \emph{sharp} if the two arcs do not form a smooth path; each junction has exactly two angles that are not sharp, and every angle at a vertex is sharp (so the number of sharp angles equals the degree of the vertex).

\begin {lemma} \label {lem:deg2}
  Let $G$ be a graph, and let $E' \subseteq E$ be the edges of $E$ that are incident to at least one vertex of degree $2$.
  If $G$ has a strict confluent drawing $D$, then it also has a strict confluent drawing $D'$ in which all edges in $E'$ are direct.
\end {lemma}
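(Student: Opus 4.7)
The plan is to modify $D$ iteratively, eliminating junctions from the smooth paths of edges in $E'$ one at a time. Let $v$ be a degree-$2$ vertex with neighbors $u$ and $w$, and suppose the edge $e=(v,u)$ is not direct in $D$; let $P$ be the unique smooth path from $v$ to $u$, consisting of arcs $a_0, a_1, \ldots, a_k$ through junctions $j_1, \ldots, j_k$ with $k \geq 1$.

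First I would establish a structural claim: at every junction $j_i$ on $P$, the side opposite the one entered by $a_{i-1}$ contains only the single arc $a_i$. If it contained any additional arc $b$, then from $v$ one could form a second smooth path by following $a_0, \ldots, a_{i-1}$ into $j_i$ and leaving along $b$ rather than $a_i$; by strictness such a path must terminate at a neighbor of $v$, but $v$'s only neighbors are $u$ and $w$, and $w$ is reached along $v$'s other arc, so the alternative path would have to end at $u$ and duplicate $P$, a contradiction.

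Given this property, I would reduce the number of junctions on $P$ by one by acting at the first junction $j_1$. If the side of $j_1$ containing $a_0$ has no other arcs, then $j_1$ carries a single arc on each side and is a trivial pass-through, so $a_0$ and $a_1$ can be merged into a single smooth arc and $j_1$ deleted. Otherwise that side contains other arcs $b_1, \ldots, b_m$, which together with $a_1$ represent smooth paths from other vertices whose continuations travel along the rest of $P$ to $u$; in this case I would detach $a_0$ from $j_1$ (so $j_1$ remains valid, retaining $b_1, \ldots, b_m$ on one side and $a_1$ on the other) and route a new direct arc $\alpha$ from $v$ to $u$ inside a thin tubular neighborhood of $P$. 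The structural claim is used here: because the opposite side of every $j_i$ on $P$ contains only $a_i$, there is an empty sector on each flank of $a_i$ through which $\alpha$ may bypass $j_i$ without meeting any other arc. Iterating the reduction over all non-direct edges of $E'$ produces the required drawing $D'$.

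The main obstacle I expect is showing that $\alpha$ can be drawn crossing-free along the entire length of $P$, not merely junction by junction. At each $j_i$ the structural claim supplies local room, but one must weave the local detours together into a single simple curve: for example, by letting $\alpha$ closely follow the interior arcs $a_1, \ldots, a_{k-1}$ on a chosen geometric side of $P$, and switching sides only within an empty sector beside $a_i$ at some $j_i$, so that no arc of $D$ is crossed during the switch. Once $\alpha$ is placed, one checks that the new drawing still represents $G$ (since $v$ has degree $2$, the only smooth path that used $a_0$ was $P$ itself, while every other smooth path is unaffected) and remains strict (exactly one new arc has been added, creating exactly one new smooth path, the replacement for $(v,u)$). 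Applying the same argument to $(v,w)$ and then to every other degree-$2$ vertex gives the lemma.
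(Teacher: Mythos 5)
Your structural claim is where the argument breaks: it is false precisely in the case that the paper has to treat separately. You justify the claim by saying that an extra arc $b$ on the far side of $j_i$ would yield a second smooth path from $v$ that must end at $u$ (since ``$w$ is reached along $v$'s other arc''). But a degree-$2$ vertex need not have two incident arcs in $D$: the smooth paths to $u$ and to $w$ may leave $v$ along the \emph{same} arc $a_0$ and diverge at some junction $j_i$ further along (think of $v$ as one of several vertices bundled together before splitting towards $u$ and $w$, as in a confluent drawing of $K_{2,3}$). At that split junction the far side contains two arcs --- $a_i$ towards $u$ and an arc towards $w$ --- and no strictness violation arises. Consequently your surgery also fails there: detaching $a_0$ from $j_1$ severs the representation of the edge $vw$ as well, and the ``empty sector on each flank of $a_i$'' that your routing of $\alpha$ relies on does not exist at the split. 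The paper's proof explicitly distinguishes these two cases (disjoint initial paths versus a shared initial path), and the shared case requires a different operation: rerouting the merge junctions that occur before the split and only then separating the ones after it. Your proof covers only the disjoint case.

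A secondary problem, which you partly flag yourself, is the side-switching step in the routing of $\alpha$. Switching $\alpha$ from one geometric side of $P$ to the other requires crossing the arc $a_i$ itself; the empty sector beyond $j_i$ gives you room on each flank of $a_i$ separately but no crossing-free passage between them. The standard repair is not to switch sides at all: after detaching $v$'s arc, re-bundle the side traffic into at most two chains of merge junctions, one on each side of the new direct arc $\alpha$, each terminating at $u$ with its own arc (this is legitimate because all angles at the vertex $u$ are sharp, so the two bundles and $\alpha$ need not be tangent there). With that fix your disjoint-path case matches the paper's first case; the shared-path case still needs to be added.
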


\begin{proof}
  Let $v$ be a degree-2 vertex in $G$ with two incident edges $e$ and $f$. We consider the representation of $e$ and $f$ in $D$ and modify $D$ so that $e$ and $f$ are single arcs. There are two cases. If $e$ and $f$ leave $v$ on two disjoint paths, then these paths have only merge junctions from $v$'s perspective. We can simply separate these junctions from $e$ and $f$ as shown in Fig.~\ref{sfg:sing-arc-disjoint}. If, on the other hand, $e$ and $f$ share the same path leaving $v$, then their paths split at some point. We need to reroute the merge junctions prior to the split and separate the merge junctions after the split as shown in Fig.~\ref{sfg:sing-arc-joint}. This is always possible since $v$ has no other incident edges. Because $D$ was strict and these changes do not affect strictness, $D'$ is still a strict confluent drawing and edges $e$ and $f$ are direct.
\end{proof}

\begin{figure}[htbp]
  \centering
  \subfigure[\label{sfg:sing-arc-disjoint}]{\includegraphics[page=1]{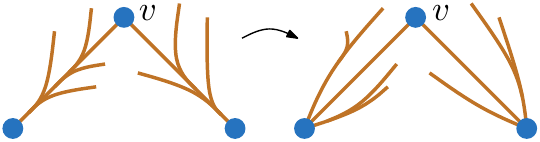}}
  \hfill  \subfigure[\label{sfg:sing-arc-joint}]{\includegraphics[page=2]{figures/single-arc}}
  \caption{The two cases of creating single arcs for edges incident to a degree-2 vertex.}
  \label{fig:single-arc}
\end{figure}

\begin {lemma} \label {lem:k22}
  Let $G$ be a graph. If $G$ has no $K_{2,2}$ as a subgraph, whose vertices have degrees $\ge 3$ in $G$, then $G$ has a strict confluent drawing if and only if $G$ is planar.
\end {lemma}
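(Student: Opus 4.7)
The forward direction is immediate: a planar embedding of $G$, with every edge drawn as a single smooth arc and no junctions, is already a strict confluent drawing. For the converse, assume that $G$ has a strict confluent drawing $D$, and, by Lemma~\ref{lem:deg2}, assume further that every edge of $G$ incident to a degree-2 vertex is drawn as a direct arc of $D$. The plan is to show that, under the $K_{2,2}$-free hypothesis, $D$ has no junctions at all, so that $D$ is a planar drawing of $G$.

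At a junction $j$, partition the incident arcs into the two sides $L_j, R_j$ determined by the non-sharp angles, and let $V_L, V_R$ be the sets of vertices reachable from $j$ by smooth continuation (with possible downstream branching) into the respective sides. Both are nonempty because every arc at $j$ belongs to some edge-path (the all-arcs-used assumption), and strictness forces (i) distinct arcs on the same side to reach disjoint sets of vertices (else the offending vertex would have two smooth paths to every vertex on the other side) and (ii) $V_L\cap V_R=\emptyset$ (else a self-loop would arise). If $|V_L|,|V_R|\ge 2$, choose $u_1\ne u_2$ in $V_L$ and $v_1\ne v_2$ in $V_R$: the four vertices are distinct, and each pair $(u_i,v_\ell)$ is connected by a smooth path through $j$, so $G$ contains a $K_{2,2}$ on these vertices. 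Each of the four has at least two of its edges routed through $j$; if any had degree exactly 2, Lemma~\ref{lem:deg2} would force its edges to be direct arcs, contradicting that they pass through $j$. So all four degrees are at least 3, contradicting the hypothesis. Therefore at every junction, one side (the \emph{trunk side}) has exactly one reachable vertex and, by (i), contains exactly one arc (the \emph{trunk}).

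To finish, I eliminate junctions one at a time. Build a directed graph $T$ on the junctions by drawing $j\to j'$ when the trunk of $j$ has its other endpoint at another junction $j'$; every node of $T$ has out-degree at most one. A directed cycle in $T$ would mean that trunk-continuation from any junction on the cycle loops forever among junctions without reaching a vertex, forcing $V_{\text{trunk}}=\emptyset$ and contradicting $|V_{\text{trunk}}|=1$. Hence $T$ is acyclic, and some junction $j$ has out-degree 0: its trunk terminates at a vertex $v$. Remove $j$ and its trunk arc, and extend each of the $|L_j|$ former branches of $j$ smoothly through a thin tubular neighborhood of the former trunk to terminate at $v$, routing the $|L_j|$ extensions pairwise disjointly inside the tube. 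This local surgery preserves planarity and strictness (each former smooth path through $j$ becomes a direct arc and no new smooth paths are created), the represented graph is unchanged, and one junction disappears. Iterating eliminates every junction and produces a planar drawing of $G$.

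The main obstacle is combining the $K_{2,2}$-free hypothesis with the all-arcs-used requirement to characterize every junction as having a single-arc trunk side reaching one vertex, and then using the same reachability constraint to prove acyclicity of $T$; once a junction whose trunk ends at a vertex is identified, the tubular-neighborhood surgery is routine.
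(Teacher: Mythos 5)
Your argument is correct in substance and runs on the same engine as the paper's proof: after invoking Lemma~\ref{lem:deg2}, the $K_{2,2}$-free hypothesis (with the degree-$\ge 3$ condition discharged exactly as you do, via the forced directness of edges at degree-2 vertices) prevents a junction from having two or more reachable vertices on both of its sides. The paper states this per \emph{path} rather than per \emph{junction} --- on any smooth $u$--$v$ path the junctions must be splits followed by merges, never a merge before a split --- and then removes all junctions at once by ``unbundling'' every path into a direct edge, justifying that step only by a figure. Your per-junction formulation (every junction has a trunk side with exactly one arc reaching exactly one vertex) is an equivalent reading of the same structural fact, but your cleanup is a genuine reorganization: an explicit induction that peels off one junction at a time at a sink of the trunk-pointer graph, with a well-foundedness argument the paper leaves implicit. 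One edge case needs a patch. Your acyclicity claim ($V_{\mathrm{trunk}}=\emptyset$ around a cycle) is sound whenever consecutive trunk arcs on the cycle are distinct, since then each trunk arc enters the next junction on its fan side and the smooth continuation is forced around the cycle forever, never meeting a vertex (equivalently, it is a smooth closed curve, which strictness forbids). But a $2$-cycle in which $j_1$ and $j_2$ \emph{share} their trunk arc escapes this: there $V_{\mathrm{trunk}(j_1)}$ equals the fan side of $j_2$, a singleton, so no contradiction arises. That configuration forces both junctions to carry a single arc on each side, i.e., to be trivial two-arc junctions at which nothing merges or splits; suppress these at the outset (or assume, as the paper does, that every junction is binary, with two arcs on one side and one on the other) and your induction goes through. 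A last wording quibble: after your surgery a former smooth path through $j$ need not become a single direct arc, since its fan arc may still lead through further junctions; what you need, and what your construction does provide, is a smooth path with the same endpoints that avoids $j$, leaving the represented graph and the uniqueness of paths intact.
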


\begin{proof}
  Since every planar drawing is also a strict confluent drawing, that implication is obvious.
	So let $D$ be a strict confluent drawing for a graph $G$ without a $K_{2,2}$ subgraph, whose vertices have degrees $\ge 3$ in $G$. Since larger junctions, where more than three arcs meet, can easily be transformed into an equivalent sequence of binary junctions, we can assume that every junction in $D$ is binary, i.e., two arcs merge into one (or, from a different perspective, one arc splits into two).
	By Lemma~\ref{lem:deg2} we can further transform $D$ so that all edges incident to degree-$2$ vertices are direct.
	Now for any vertex $u$ in $D$ none of its outgoing paths to some neighbor $v$ can visit a merge junction before visiting a split junction as this would imply either a non-strict drawing or a $K_{2,2}$ subgraph with vertex degrees $\ge 3$. So the sequence of junctions on any $u$-$v$ path consists of a number of split junctions followed by a number of merge junctions. But any such path can be unbundled from its junctions to the left and right and turned into a direct edge without creating arc intersections  as illustrated in Fig.~\ref{fig:no-k22}. This shows that $D$ can be transformed into a standard planar drawing of~$G$.
\end{proof}

\begin{figure}[htbp]
	\centering
		\includegraphics[scale=1]{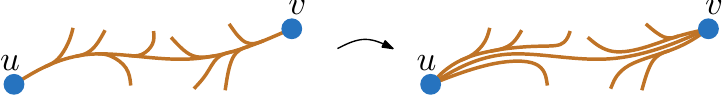}
	\caption{Any strict confluent drawing of a graph without a $K_{2,2}$ subgraph can be transformed into a standard planar drawing.}
	\label{fig:no-k22}
\end{figure}

The next lemma characterizes the combinatorial complexity of strict confluent drawings. Its proof is found in Appendix~\ref{app:complexity} and uses Euler's formula and double counting.

\begin{lemma}\label{lem:linear}
	The combinatorial complexity of any strict confluent drawing $D$ of a graph~$G$, i.e., the number of arcs, junctions, and faces in $D$, is linear in the number of vertices of~$G$. 
\end{lemma}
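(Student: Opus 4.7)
The plan is to combine Euler's formula with a double-counting of the sharp angles on face boundaries. First, normalize $D$ so that every junction is binary, by replacing each higher-arity junction with an equivalent chain of binary ones (as in the proof of Lemma~\ref{lem:k22}); this affects the complexity only by a constant factor. View the resulting drawing as a planar (multi)graph $T$ on the $n$ graph vertices together with the $j$ junctions, with the $a$ arcs as edges. Euler's formula yields $n+j-a+f=1+c$, where $f$ is the number of faces and $c$ the number of connected components of $T$, and the handshake identity gives $2a=\sum_v d_v + 3j$, where $d_v$ is the number of arcs of $D$ incident to vertex $v$. Since every angle at a vertex is sharp and every binary junction has exactly one sharp angle, summing over faces yields $\sum_F \sigma(F)=\sum_v d_v + j = 2a - 2j$, where $\sigma(F)$ denotes the number of sharp angles on the boundary of face $F$.

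The crux is the structural claim $\sigma(F)\ge 3$ for every face $F$, which we would establish by ruling out the cases $\sigma(F)\in\{0,1,2\}$. If $\sigma(F)=2$, the two sharp corners $p_1,p_2$ partition $\partial F$ into two smooth sub-paths. Whenever $p_i$ is a junction, both of its boundary arcs are branches of $p_i$, so each sub-path can be prolonged along the trunk of $p_i$ (following subsequent smooth continuations, making consistent choices at splits) until it reaches an actual vertex $q_i$. The result is two distinct smooth $q_1$--$q_2$ paths, contradicting strictness. The case $\sigma(F)=1$ gives, by the same trunk-extension trick, a smooth self-loop at a vertex, forbidden by strictness. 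The case $\sigma(F)=0$ means $\partial F$ is a closed smooth trajectory passing only through junctions; any $u$--$v$ smooth path using an arc of this trajectory must enter and leave the loop through off-loop arcs, and an analysis of the trunk/branch roles at each loop junction shows that either the entering path is forced to loop forever (contradicting the existence of a finite $u$--$v$ path) or it can be extended by one extra traversal of the loop to produce a second smooth $u$--$v$ path, again contradicting strictness. This last subcase will be the main obstacle, since unlike the other two it lacks a sharp corner to anchor the extension argument and requires global reasoning about trunk orientations around the loop.

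Given $\sigma(F)\ge 3$, the inequality $3f\le\sum_v d_v + j$ combined with Euler's formula $f=1+c-n-j+a$ and the handshake identity $\sum_v d_v=2a-3j$ yields $a+\sum_v d_v=O(n)$, whence $a$, $j$, and $f$ are each $O(n)$, giving the desired linear bound on the combinatorial complexity of $D$.
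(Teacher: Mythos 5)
Your proposal is correct and follows essentially the same route as the paper's proof in Appendix~\ref{app:complexity}: Euler's formula plus double counting of sharp angles, with the key structural fact that every face has at least three sharp angles (the paper's Lemma~\ref{lem:threesharp}), whose proof likewise reduces the degenerate cases to self-loops, duplicate smooth paths, or smooth closed curves (the paper's Lemma~\ref{lem:noloops}). The only cosmetic differences are that the paper first normalizes every vertex to degree one whereas you carry the vertex degrees $d_v$ through the computation, and that the paper isolates the no-smooth-closed-curve fact as a separate lemma rather than folding it into the $\sigma(F)=0$ case.
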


Lemma~\ref{lem:linear} is in contrast to previous methods for confluently drawing interval graphs~\cite{degm-cd-05} and for drawing confluent Hasse diagrams~\cite{EppSim-GD-11}, both of which may produce (non-strict) drawings with quadratically many features.

\section {Computational Complexity}

We will show by a reduction from planar 3-SAT~\cite{l-pftu-82} that it is NP-complete to decide whether a graph $G$ has a strict confluent drawing in which all edges incident to degree-$2$ vertices are direct. By Lemma~\ref {lem:deg2}, this is enough to show that it is also NP-complete to decide if $G$ has any strict confluent drawing.

Consider the subdivided grid graph (a grid with one extra vertex on each edge). In this graph, all edges are adjacent to a degree $2$ vertex. Since a grid graph more than one square wide has only one fixed planar embedding (up to choice of the outer face), the subdivided grid graph has only one confluent embedding in which all edges are direct. We will base our construction on a number of such grids.

\tweeplaatjes [scale=.78] {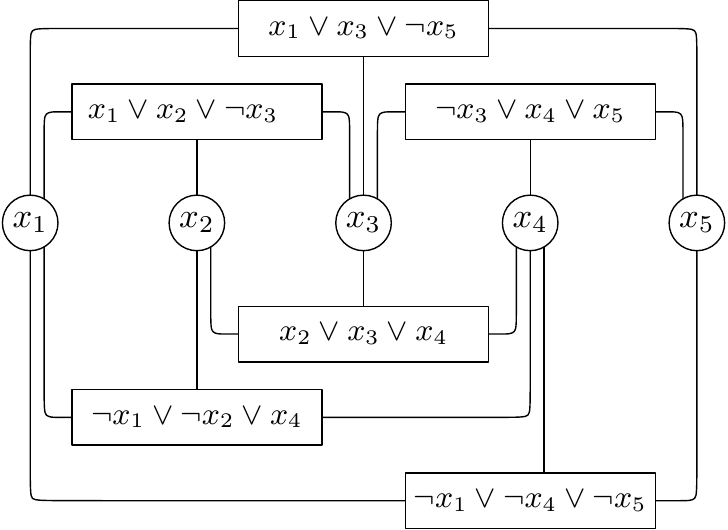} {hard-global-frame}
{ (a) A planar 3-SAT formula.
  (b) The corresponding global frame of the construction: one grid graph per variable, with some vertices identified at each clause. Green boundary edges correspond to positive literals, red edges to negated literals.
      For easier readability the grids in this figure are larger than strictly necessary.
}

Let $S$ be a planar 3-SAT formula. %
Globally speaking, we will create a grid graph for each variable of $S$, of size depending on the number of clauses that the variable appears in. The external edges of this grid graph are alternatingly colored green and red.
We connect the variable graphs by identifying certain vertices: for each of the three variables that appear in a clause, we select one subdivided edge (that is, three vertices connected by two edges) on the outer face, and identify the endpoints of these edges into a triangle of subdivided edges (that is, a $6$-cycle). We choose a green edge for a positive occurrence of the variable and a red edge for a negated occurrence. This will become clear below.
We call the resulting graph $F$ the \emph {frame} of the construction; all edges of $F$ are adjacent to a degree-$2$ vertex and $F$ has only one planar embedding (up to choice of the outer face).
Figure~\ref {fig:hard-global-3sat+hard-global-frame} shows an example.

\eenplaatje {hardness-K4} {$K_4$ and its two strict confluent drawings, without moving the vertices and keeping all arcs inside the convex hull of the vertices.}
\drieplaatjes [scale=.97] {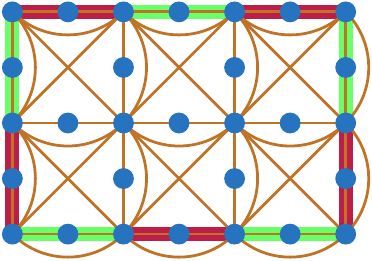} {hardness-var-true} {hardness-var-false}
{ (a) A variable gadget consists of a grid of $K_4$'s. Green (light) edges of the frame highlight normal literals, red (dark) edges negated ones.
  (b) One of the two possible strict  confluent drawings, corresponding to the value \emph {true}.
  (c) The other strict confluent drawing, corresponding to \emph {false}.
}

\begin{wrapfigure}[18]{r}{0.4\textwidth}
    \centering
    \includegraphics[scale=.8]{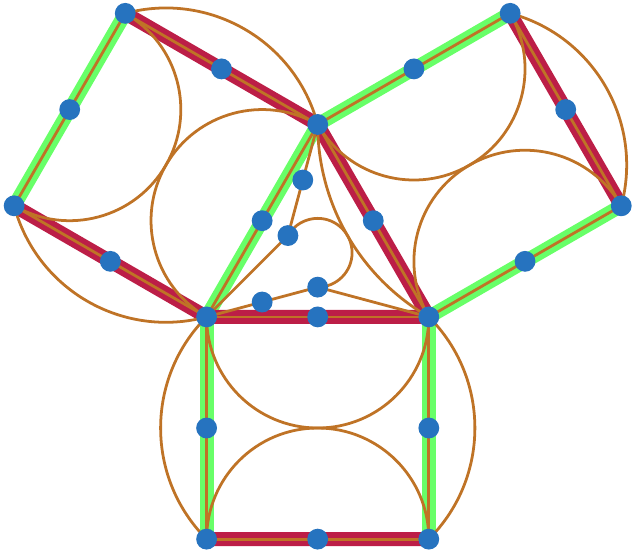}
	\caption{Three variables attached to a clause gadget. The top left variable occurs in the clause as a positive literal, the others as negative literals. The clause can be satisfied because the top right variable is set to \emph {false}.}
	\label{fig:hardness-clause-attached}
\end{wrapfigure}
The main idea of the construction is based on the fact that $K_4$, when drawn with all four vertices on the outer face, has exactly two strict confluent drawings: we need to create a junction that merges the diagonal edges with one pair of opposite edges, and we can choose the pair.
Figure~\ref {fig:hardness-K4} illustrates this.
We will add a copy of $K_4$ to every cell of the frame graph $F$. Recall that every cell, except for the triangular clause faces, is a subdivided square (that is, an $8$-cycle). We add $K_4$ on the four grid vertices (not the subdivision vertices). The edges that connect external grid vertices are called \emph{literal edges}.
Figure~\ref {fig:hardness-var-graph} shows this for a small grid.
Since neighboring grid cells share a (subdivided) edge, the $K_4$'s are not edge-independent. This implies that in a strict confluent drawing, we cannot ``use'' such a common edge in both cells. Therefore, we need to orient the $K_4$-junctions alternatingly, as illustrated in Figures~\ref {fig:hardness-var-true} and~\ref {fig:hardness-var-false}.
If the grid is sufficiently large (every cell is part of a larger at least size-$(2 \times 2)$ grid) these choices are completely propagated through the entire grid, so there are two structurally different possible embeddings,  which we use to represent the values \emph {true} and \emph {false} of the corresponding variable. For every green edge of the frame in the \emph{true} state and every red edge in the \emph{false} state there is one remaining literal edge in the outer face, which can still be drawn either inside or outside their grid cells. In the opposite states these literal edges are needed inside the grid cells to create the $K_4$ junctions.  The availability of at least one literal edge (corresponding to a \emph{true} literal) is important for satisfying the clause gadgets, which we describe next.

\vierplaatjes {hardness-clause} {hardness-clause-1} {hardness-clause-2} {hardness-clause-3}
{ (a) The input graph of the clause.
  (b, c, d) Three different strict confluent drawings.
}

Inside each triangular clause face, we add the graph depicted in Figure~\ref {fig:hardness-clause}. This graph has several strict confluent drawings; however, in every drawing at least one of the three outer edges needs to be drawn inside the subdivided triangle.

\begin {lemma}\label{lem:one-edge-in}
  There is no strict confluent drawing of the clause graph in which all three long edges are drawn outside. Moreover, there is a strict confluent drawing of the clause graph with two of these edges outside, for every pair.
\end {lemma}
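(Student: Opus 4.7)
The plan is to handle the two halves of the lemma separately, and to exploit the three-fold rotational symmetry of the clause graph in both halves.

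I would begin with the constructive part. By symmetry it suffices to exhibit one strict confluent drawing in which a designated long edge lies inside the triangular clause face while the other two lie outside; the other two configurations then follow by rotating the roles of the three long edges. Three such drawings are indicated in Figures~\ref{fig:hardness-clause-1}--\ref{fig:hardness-clause-3}. For each, I would verify directly, using the small size of the clause graph, that every required adjacency is realized by a unique smooth path through the arcs and junctions, and that no pair of vertices is joined by two distinct smooth paths and no vertex is joined to itself. These finite checks establish the three drawings claimed in the second part of the lemma.

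For the impossibility half, suppose for contradiction that a strict confluent drawing $D$ of the clause graph places all three long edges outside the triangle. By Lemma~\ref{lem:deg2} I may assume that the six boundary edges of the triangular face (each incident to a degree-$2$ subdivision vertex) are direct, so the boundary of the face is a fixed $6$-cycle. The three attachment vertices $u_1,u_2,u_3$ of the long edges then appear on this boundary in a fixed cyclic order, and the entire remainder of the drawing must lie inside the triangle. I would show that realizing all the required internal adjacencies of the clause graph among $u_1,u_2,u_3$ and the interior vertices by unique smooth paths inside a triangular region is impossible.

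The main obstacle is to make this impossibility argument rigorous without a lengthy case enumeration. My plan is to use Lemma~\ref{lem:linear} to bound the combinatorial complexity of the interior drawing, and then classify the possibilities by the structure of the first junction encountered on the smooth path realizing each edge leaving $u_i$ toward the interior. The rotational symmetry reduces the number of essentially distinct subcases to a handful, and for each subcase I expect to identify either a pair of vertices connected by two distinct smooth paths (contradicting strictness) or a required adjacency for which no smooth path exists at all. Combined with the constructive part, this yields the lemma.
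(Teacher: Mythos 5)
Your constructive half coincides with the paper's: it, too, simply points to Figures~\ref{fig:hardness-clause-1}--\ref{fig:hardness-clause-3} and relies on a finite check, so there is nothing to object to there. The problem is the impossibility half, where your proposal contains a genuine gap: everything after ``the main obstacle is to make this impossibility argument rigorous'' is a description of a case analysis you have not carried out, ending with ``I expect to identify \dots''. As written, the entire content of the first claim of the lemma is deferred to an unexecuted enumeration, and it is not at all clear that classifying drawings ``by the structure of the first junction'' leaving each $u_i$ terminates in ``a handful'' of cases, since junctions may cascade; Lemma~\ref{lem:linear} bounds their number only linearly, which still leaves a large configuration space. So the proposal does not yet prove the statement.

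The idea you are missing is that no case analysis is needed. If all three long edges are drawn outside the $6$-cycle, delete them; what remains is a strict confluent drawing of the clause graph minus those three edges. That remaining graph contains no $K_{2,2}$ subgraph all of whose four vertices have degree at least $3$, so Lemma~\ref{lem:k22} applies and says it admits a strict confluent drawing if and only if it is planar. But the remaining graph is a subdivided $K_5$, hence non-planar, a contradiction. This is the paper's two-line argument, and it is exactly the reason Lemma~\ref{lem:k22} was established in the preliminaries; your plan never invokes that lemma, which is the structural tool designed to replace the kind of local junction-by-junction reasoning you propose. I would recommend reworking the impossibility half around this reduction to planarity rather than attempting the enumeration.
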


\begin {proof}
  Recall that by Lemma~\ref {lem:deg2} the subdivided triangle must be embedded as a $6$-cycle of direct arcs.
To prove the first part of the lemma, assume that the triangle edges are all drawn outside this cycle.
  The remainder of the graph has no $4$-cycles without subdivision vertices (that is, no $K_{2,2}$ with higher-degree vertices), so by Lemma~\ref {lem:k22} it can only have a strict confluent drawing if it is planar. However, it is a subdivided $K_5$, which is not planar.  To prove the second part of the lemma, we refer to Figures~\ref {fig:hardness-clause-1},~\ref {fig:hardness-clause-2} and~\ref {fig:hardness-clause-3}.
\end {proof}

This describes the reduction from a planar 3-SAT instance to a graph consisting of variable and clause gadgets. Next we show that this graph has a strict confluent drawing if and only if the planar 3-SAT formula is satisfiable. For a given satisfying assignment we choose the corresponding embeddings of all variable gadgets. The assignment has at least one \emph{true} literal per clause, and correspondingly in each clause gadget one of the three literal edges can be drawn inside the clause triangle, allowing a strict confluent drawing by Lemma~\ref{lem:one-edge-in}. Conversely, in any strict confluent drawing, each clause must be drawn with at least one literal edge inside the clause triangle by Lemma~\ref{lem:one-edge-in}, so translating the state of each variable gadget into its truth value yields a satisfying assignment.

To show that testing strict confluence is in NP, recall that by Lemma~\ref{lem:linear} the combinatorial complexity of the drawing is linear in the number of vertices. Thus the existence of a drawing can be verified by guessing its combinatorial structure and verifying that it is planar and a drawing of the correct graph.

\begin {theorem}
  Deciding whether a graph has a strict confluent drawing is NP-complete.
\end {theorem}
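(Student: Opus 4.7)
My proof would assemble two ingredients: membership in NP and NP-hardness via the reduction from planar 3-SAT that is set up in the preceding subsection. By Lemma~\ref{lem:deg2} it suffices to decide whether a strict confluent drawing exists in which every edge incident to a degree-$2$ vertex is a direct arc, so I would phrase the hardness argument for this restricted form. Membership in NP is essentially immediate from Lemma~\ref{lem:linear}: the certificate is the combinatorial structure of the drawing (the rotation system at each vertex and junction, together with the sharp/smooth labeling at junctions), which has size linear in $|V(G)|$; given this, one verifies in polynomial time that it is a planar embedding and that the set of smooth paths between vertices realizes exactly the edge set of $G$.

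For hardness, I would verify the reduction in three steps. First, because every edge of the frame $F$ is incident to a degree-$2$ vertex, Lemma~\ref{lem:deg2} forces those edges to be direct arcs, so $F$ is forced into its unique planar embedding (up to the outer face). Second, inside each grid cell the $K_4$ admits exactly the two strict confluent configurations of Fig.~\ref{fig:hardness-K4}; since neighboring cells share a subdivided edge which cannot be used as part of a smooth path in both cells, the local choices must alternate, and the size of the grid forces this alternation to propagate consistently throughout the variable gadget. The two global choices represent the truth values of the variable, with exactly the green (respectively red) literal edges free to leave their cell when the variable is \emph{true} (respectively \emph{false}). Third, Lemma~\ref{lem:one-edge-in} says that the clause gadget admits a strict confluent drawing iff at least one of its three incoming literal edges can be drawn inside the clause triangle, which happens iff at least one of the three literals evaluates to \emph{true}. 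Combining the three steps, a strict confluent drawing exists iff the original planar 3-SAT instance is satisfiable, and since planar 3-SAT is NP-complete~\cite{l-pftu-82}, the theorem follows.

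The step I expect to be the hardest is the global consistency claim for the variable gadget in step two: one has to rule out ``hybrid'' drawings in which some $K_4$ cells take one orientation and other cells the opposite, as well as drawings that sneak extra smooth paths through the shared subdivided edges between adjacent cells. The $2\times 2$-subgrid condition is what makes the local two-coloring argument go through, but turning the local argument into a rigorous global one requires a careful induction over adjacent cells, using Lemma~\ref{lem:deg2} to keep the shared boundary edges direct and Lemma~\ref{lem:k22} to rule out spurious confluent structure inside a cell. Once this propagation lemma is established, the equivalence with satisfiability is routine.
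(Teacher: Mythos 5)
Your proposal is correct and follows essentially the same route as the paper: NP membership via the linear combinatorial complexity bound of Lemma~\ref{lem:linear}, and hardness via the planar 3-SAT reduction with the rigid frame (Lemma~\ref{lem:deg2}), the two-state $K_4$-grid variable gadgets, and the clause gadget analyzed in Lemma~\ref{lem:one-edge-in}. Your remark that the global propagation of the $K_4$ orientations across the grid is the step requiring the most care is apt --- the paper states this propagation rather than proving it in detail --- but this is a matter of emphasis, not a different argument.
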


\section {Outerplanar Strict Confluent Drawings}

For a graph $G$ with a fixed cyclic ordering of its vertices, we can test in polynomial time whether an outerplanar strict confluent drawing with this vertex ordering exists, and, if so, construct one. This algorithm uses the closely related notion of a canonical diagram of $G$, which is unique and exists if and only if an outerplanar strict confluent drawing exists. From the canonical diagram a confluent drawing can be constructed. We further show that the drawing can be constructed such that every arc consists of at most two circular arcs.

\subsection{Canonical Diagrams}

We define a \emph{canonical diagram} to be a collection of junctions and arcs connecting the vertices in the given order on the outer face (as in a confluent drawing), but with some of the faces of the diagram \emph{marked}, satisfying additional constraints enumerated below. Figure~\ref{fig:threeviews} shows a canonical diagram and an outerplanar strict confluent drawing of the same graph. In such a diagram, a \emph{trail} is a smooth curve from one vertex to another that follows the arcs (as in a confluent drawing) but is allowed to cross the interior of marked faces from one of its sharp corners to another. The constraints are:
\begin{itemize}
\item Every arc is part of at least one trail.
\item No two trails between the same two vertices can follow different sequences of arcs and faces.
\item Each marked face must have at least four angles, all of which are sharp.
\item Each arc must have either sharp angles or vertices at both of its ends.
\item For each junction $j$ with exactly two arcs in each direction, let $f$ and $f'$ be the two faces with sharp angles at $j$. Then it is not allowed for $f$ and $f'$ to both be either marked or to be a triangle (a face with three angles, all sharp).
\end{itemize}

\eenplaatje [width=\textwidth] {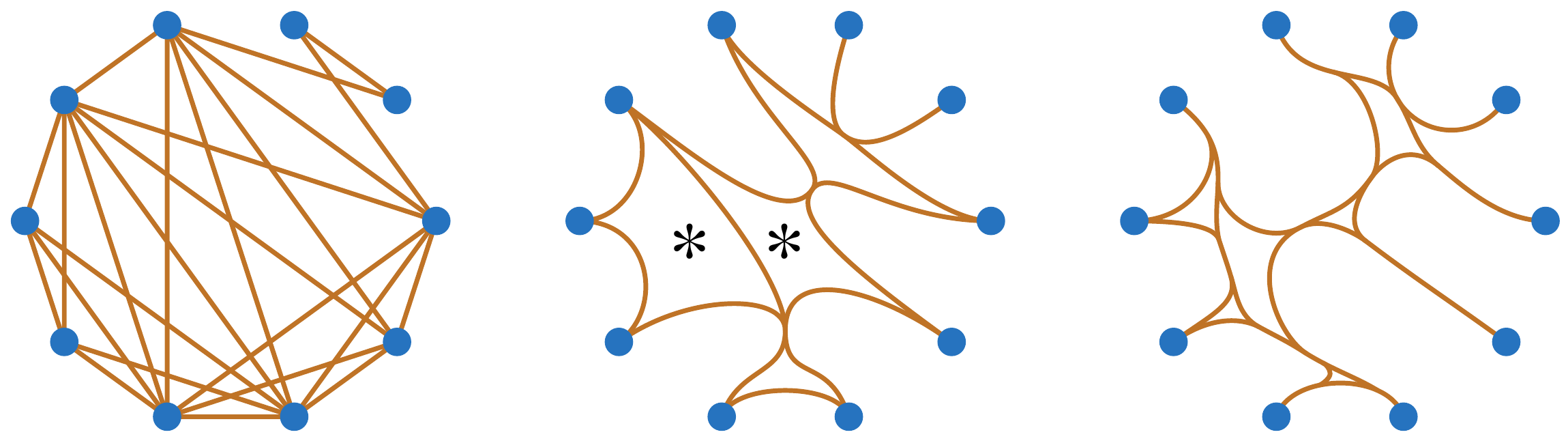} {Three views of the same graph as a node-link diagram (left), canonical diagram (center), and outerplanar strict confluent drawing (right).}

Let $j$ be a junction of a canonical diagram $D$. Then
define the \emph{funnel} of $j$ to be the 4-tuple of vertices $a,b,c,d$ where $a$
is the vertex reached by a path that leaves $j$ in one direction and
continues as far clockwise as possible, $b$ is the most counterclockwise
vertex reachable in the same direction from $j$, $c$ is the most clockwise
vertex reachable in the other direction, and $d$ is the most
counterclockwise vertex reachable in the other direction. Note that
none of the paths from $j$ to $a$, $b$, $c$, and $d$ can intersect each other without contradicting the uniqueness of trails. We call the circular intervals of vertices $[a,b]$ and $[c,d]$ (in the counterclockwise direction) the \emph{funnel intervals} of the respective funnel. We say a circular interval $[a,b]$ is \emph{separated} if either $a$ and $b$ are not adjacent in $G$, or there exists a junction in the canonical diagram with funnel intervals $[a,e]$ and $[f,b]$, where $e, f \in [a,b]$.

A canonical diagram represents a graph $G$ in which the edges in $G$ correspond to trails in the diagram. As we show in Appendix~\ref {app:canonical}, a graph $G$ has a canonical diagram if and only if it has an outerplanar strict confluent drawing, and if a canonical diagram exists then it is unique.

\subsection{Algorithm}

By using the properties of canonical diagrams (see Appendix~\ref{app:canonical}), we may obtain an algorithm that constructs a canonical diagram and strict confluent drawing of a given cyclically-ordered graph $G$, or reports that no drawing exists, in time and space $O(n^2)$. This bound is optimal in the worst case, as it matches the input size of a graph that may have quadratically many edges.

Steps~1--3 of the algorithm, detailed below, build some simple data structures that speed up the subsequent computations. Step~4 discovers all of the funnels in the input, from which it constructs a list of all of the junctions of the canonical diagram. Step~5 connects these junctions into a planar drawing, a subset of the canonical diagram. Step~6 builds a graph for each face of this drawing that will be used to complete it into the entire canonical diagram, and step~7 uses these graphs to find the remaining arcs of the diagram and to determine which faces of the diagram are marked. Step~8 checks that the diagram constructed by the previous steps correctly represents the input graph, and step~9 splits the marked faces, converting the diagram into a strict confluent drawing.

\begin{enumerate}
\item Number the vertices clockwise around the boundary cycle from $0$ to $n-1$.
\item\label{algorithm interval adjacency} Build a table, containing for each pair $i,j$, the number of ordered pairs $(i',j')$ with $i'\le i$, $j'\le j$, and vertices $i'$ and $j'$ adjacent in $G$. By performing a constant number of lookups in this table we may determine in constant time how many edges exist between any two disjoint intervals of the boundary cycle.
\item Build a table that lists, for each ordered pair $u,v$ of vertices, the neighbor $w$ of $u$ that is closest in clockwise order to $v$. That is, $w$ is adjacent to $u$, and the interval from $v$ clockwise to $w$ contains no other neighbors of $u$. The table entries for $u$ can be found in linear time by a single counterclockwise scan. Repeat the same construction in the opposite orientation.
\item\label{algorithm find junctions} For each separated interval $[a, b]$, let $c$ be the next neighbor of $a$ that is counterclockwise of $b$, and let $d$ be the next neighbor of $b$ that is clockwise of $a$. If (i) $c$ is a neighbor of $b$, (ii) $d$ is a neighbor of $a$, (iii) $a$ is the next neighbor of $c$ that is counterclockwise of $d$, and (iv) $b$ is the next neighbor of $d$ that is clockwise of $c$, then
(if a confluent diagram exists) $a,b,c,d$ must form the funnel of a junction, and all funnels have this form. We check all circular intervals in increasing order of their cardinalities. For each discovered funnel, we mark the intervals that are separated by the corresponding junction. This way we can check in $O(1)$ time whether a circular interval is separated.
If the number of funnels exceeds the linear bound of Lemma~\ref{lem:linear} on the number of junctions in a confluent drawing, abort the algorithm.
\item\label{algorithm funnel junctions} Create a junction for each of the funnels found in step~\ref{algorithm find junctions}. For each vertex $v$, make a set $J_v$ of the junctions whose funnel includes that vertex; if they are to be drawn as part of a canonical diagram, the junctions of $J_v$ need to be connected to $v$ by a confluent tree. For any two junctions in $J_v$, it is possible to determine in constant time whether one is an ancestor of another in this tree, or if not whether one is clockwise of the other, by examining the cyclic ordering of vertices in their funnels. Construct the trees of junctions and their planar embedding in this way. The result of this stage of the algorithm should be a planar embedding of part of the canonical diagram consisting of all vertices and junctions, and the subset of the arcs that are part of a path from a junction to one of its funnel vertices. Check that the embedding is planar by computing its Euler characteristic, and abort the algorithm if it is not.
\item For each face $f$ of the drawing created in step~\ref{algorithm funnel junctions}, and each pair $j,j'$ of junctions belonging to $f$, use the data structure from step~\ref{algorithm interval adjacency} to test whether there is an edge whose trail passes through both $j$ and $j'$. This results in a graph $H_f$ in which the vertices represent the vertices or junctions on the boundary of $f$ and the edges represent pairs of vertices or junctions that must be connected, either by an arc or by shared membership in a marked face. The remaining arcs to be drawn in $f$ will be exactly the edges of $H_f$ that are not crossed by other edges of $H_f$; the marked faces in $f$ will be exactly the faces %
that contain pairs of crossing edges of $H_f$.
\item Within each face $f$ of the drawing so far, build a table using the same construction as in step~\ref{algorithm interval adjacency} that can be used to determine the existence of a crossing edge for an edge in $H_f$ in constant time. Use this data structure to identify the crossed edges, and draw an arc in $f$ for each uncrossed edge. For each face $g$ of the resulting subdivision of $f$, if $g$ has four or more vertices or junctions, find two pairs that would cross and test whether both pairs correspond to edges in $H_f$; if so, mark~$g$.
\item Construct a directed graph that has a vertex for each vertex of $G$, two vertices for each junction of the diagram (one in each direction), two directed edges for each arc, and a directed edge for each ordered pair of sharp angles that are non-consecutive in a marked face. By performing a depth-first search in this graph, determine whether there exist multiple smooth paths in the resulting drawing from any vertex of $G$ to any other point in the drawing, and abort the algorithm if any such pair of paths is found. Determine the set of vertices of $G$ reachable from $v$ and verify that it is the same set of vertices that are reachable in the original graph. Additionally, verify that the diagram satisfies the requirements in the definition of a canonical diagram. Abort the algorithm if any inconsistency is found in this step.
\item Convert the canonical diagram into a confluent drawing and return it.
\end{enumerate}

\begin{theorem}
For a given $n$-vertex graph $G$, and a given circular ordering of its vertices, it is possible to determine whether $G$ has an outerplanar strict confluent drawing with the given vertex ordering, and if so to construct one, in time $O(n^2)$.
\end{theorem}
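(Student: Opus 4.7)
The plan is to argue correctness and running time for each of the nine steps in turn, leaning on the structural properties of canonical diagrams established in Appendix~\ref{app:canonical} together with Lemma~\ref{lem:linear}. Since a canonical diagram exists iff an outerplanar strict confluent drawing exists, and is unique when it does, it suffices to show that the algorithm faithfully reconstructs this unique object whenever it is well-defined, and otherwise aborts.

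For the running time, the preprocessing Steps~1--3 build two-dimensional prefix-sum tables on the adjacency matrix together with ``next neighbor'' tables, each of which clearly fits in $O(n^2)$ time and space and supports the constant-time queries needed later. Step~4 iterates over at most $O(n^2)$ candidate intervals $[a,b]$; for each, the four conditions (i)--(iv) are checked in $O(1)$ using the Step~3 tables, and the ``separated'' status of an interval is looked up or set in $O(1)$. By Lemma~\ref{lem:linear} there are only $O(n)$ junctions, so processing intervals in order of cardinality and aborting once that bound is exceeded keeps Step~4 within $O(n^2)$. Step~5 assembles the $O(n)$ junctions into trees $J_v$ using the ancestor/cyclic-ordering test; this costs $O(n)$ per vertex, hence $O(n^2)$ overall, and the planarity of the resulting embedding is verified by a single Euler-characteristic count. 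Step~6 examines, for each face $f$, the pairs of incident junctions; since the total size of all $H_f$ is $O(n^2)$ by Lemma~\ref{lem:linear}, and each edge-existence test is $O(1)$ via Step~2, this also fits in $O(n^2)$. Step~7 similarly reuses a prefix-sum construction inside each face to detect crossings in constant time per pair, keeping the total at $O(n^2)$. Steps~8 and~9 operate on a drawing of linear combinatorial complexity and so take $O(n)$ time.

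For correctness, the forward direction is the nontrivial one: if the input admits an outerplanar strict confluent drawing, its canonical diagram is unique (Appendix~\ref{app:canonical}), and we need to show that the algorithm recovers exactly that diagram. Here I would invoke the characterization of funnels: every junction of the canonical diagram has a funnel $(a,b,c,d)$ satisfying conditions (i)--(iv) of Step~4, and conversely any 4-tuple satisfying those conditions and arising from a separated interval corresponds to a junction. This guarantees that Step~4 enumerates precisely the junction set; Step~5 then connects them into the unique collection of funnel trees $J_v$ because the cyclic order of vertices in the funnels determines the ancestor and left/right relations among junctions sharing a vertex. Steps~6 and~7 complete the diagram inside each face by observing that, in the unique canonical diagram, an unordered pair of boundary elements of $f$ is joined by an arc iff it is an edge of $H_f$ that is not crossed by any other edge of $H_f$, while each marked face of $f$ is bounded by mutually crossing edges of $H_f$ (this is where the canonical-diagram constraints on marked faces having $\ge 4$ sharp angles and on restricted junction types enter).

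The main obstacle is the backward direction: whenever any of the guessed structural hypotheses fails, the algorithm must abort rather than output a diagram that does not represent $G$. For this I would rely on the final Step~8 as a universal sanity check: it builds the directed ``smooth-path'' graph of the constructed diagram (with edges both along arcs and across marked faces between non-consecutive sharp angles) and performs a depth-first search from each vertex to verify (a) uniqueness of the smooth path to every other point reachable from it, (b) that the reachable vertex set equals the neighborhood in $G$, and (c) that all combinatorial constraints in the definition of a canonical diagram hold. Any deviation between the constructed diagram and a genuine canonical diagram of $G$ causes a failure of (a), (b), or (c), so Step~8 correctly rejects when no outerplanar strict confluent drawing exists. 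Finally, Step~9 subdivides each marked face so as to replace it with a locally planar confluent gadget, producing an actual outerplanar strict confluent drawing; this conversion is local, preserves trails, and runs in time linear in the diagram's complexity.
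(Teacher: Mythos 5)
Your proposal follows essentially the same route as the paper: the paper proves this theorem implicitly through the nine-step algorithm description together with the canonical-diagram lemmas of Appendix~\ref{app:canonical} (existence iff an outerplanar strict confluent drawing exists, uniqueness, and the funnel characterizations), and your per-step runtime and correctness analysis tracks that structure and invokes the same supporting results. The only nitpick is that Step~8 performs a depth-first search from each of the $n$ vertices in a graph of linear size, so it costs $O(n^2)$ rather than the $O(n)$ you claim, which still fits within the stated bound.
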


\subsection{Drawings with low curve complexity}\label{sec:curvecomplexity}

Suppose that we are given a topological description of an outerplanar strict confluent drawing $D$ of a connected graph $G$, describing the tangency pattern and ordering of the arcs at each junction. It still remains to draw $D$ (or possibly an equivalent but combinatorially different outerplanar strict confluent drawing) in the plane using concrete curves for its arcs. If we ignore the tangency requirements at its junctions, the arcs and junctions of $D$ form a planar graph, but applying standard planar graph drawing methods will generate arcs that may not be smooth and that are not tangent to each other at the junctions. So how are we to draw $D$? In this section we use a circle packing method to draw $D$ with a small number of circular arcs for each arc of~$D$.
Thus, these drawings have low \emph{curve complexity} in the sense of Bekos et al.~\cite{BekKauKob-GD-12}, but with this complexity measured along arcs of the confluent diagram rather than edges of another type of graph drawing.

Given such a drawing $D$, let $D'$ be a modified version of $D$ in which every junction is incident to exactly three arcs, formed from $D$ by suppressing two-arc junctions and splitting junctions with more than three arcs.
Assume also (again by adding more junctions if necessary) that each vertex in $D'$ has only a single arc incident to it.

Given the topological diagram $D'$, we form a planar graph $H$ that has a vertex for each vertex or junction of $D'$, and an edge for each arc of $D'$.  Additionally, we create an edge in $H$ for each two vertices that are consecutive in the cyclic ordering of the vertices around the disk containing the drawing.

\begin{lemma}
\label{lem:3-regular}
$H$ is planar, 3-regular, and 3-vertex-connected.
\end{lemma}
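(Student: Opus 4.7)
\medskip
\noindent\textbf{Proof plan.} Planarity and $3$-regularity are immediate from the construction of $D'$. The arcs of $D'$ are drawn inside the disk without crossings, and the new edges of $H$ join cyclically consecutive boundary vertices, so they can be routed along (or just outside) the boundary circle to yield a planar embedding of $H$ with the boundary cycle as outer face. For regularity, every junction is by assumption incident to exactly three arcs, and every boundary vertex is incident to exactly one arc plus the two new boundary edges, so each vertex of $H$ has degree~$3$.

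The substantive part is $3$-vertex-connectivity. I plan to use the standard equivalence for $3$-regular simple planar graphs between $3$-vertex-connectivity and the property that no two faces of a planar embedding share more than one edge (equivalently, that the graph has no $2$-edge cut). So suppose, for contradiction, that two faces $F$ and $F'$ of the chosen planar embedding of $H$ share two distinct edges $e_1, e_2$. These edges then cut the primal graph into two nonempty parts. My aim is to translate this purely combinatorial configuration back into a violation of one of the canonical-diagram axioms stated earlier in the paper, namely either the uniqueness of trails between any pair of vertices or the requirement that every arc participate in at least one trail.

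The argument splits into cases on whether each of $e_1, e_2$ is an arc of $D'$ or one of the new boundary edges. When both are arcs of $D'$, the two boundary walks along $F$ and $F'$ can be prolonged into trails; by chasing smooth continuations at the junctions, I would exhibit two combinatorially distinct smooth trails between a common pair of vertices of $G$, directly contradicting uniqueness. When at least one of $e_1, e_2$ is a boundary edge, I would use the fact that each boundary vertex has only a single arc going into the interior of $D'$: the pair $\{e_1,e_2\}$ then isolates a sub-disk whose confluent interaction with the rest of $D'$ is constrained enough that either some interior arc lies on no trail at all, or two trails again coincide between some pair of boundary vertices. The degenerate situation in which $e_1, e_2$ are parallel between the same endpoints is handled separately but by the same principle: parallel arcs would themselves witness two distinct trails between a common pair of endpoints, and a parallel pair involving a boundary edge is incompatible with the single-arc property of boundary vertices. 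The main obstacle is making this case analysis uniform, so that every way in which two faces of $H$ can share two edges is converted into a definite violation of the canonical-diagram axioms, rather than leaving residual configurations that have to be treated by ad hoc geometric arguments.
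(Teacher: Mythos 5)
Your planarity and 3-regularity arguments match the paper's, which simply asserts that these follow from the construction. For 3-vertex-connectivity, however, what you have is a strategy with the decisive step missing: you reduce the claim to showing that no two faces of $H$ share two edges, but converting an arbitrary shared pair $\{e_1,e_2\}$ into a violation of the confluent-drawing axioms is exactly where all the work lies, and you explicitly leave it unresolved (``the main obstacle is making this case analysis uniform''). As it stands this is a plan, not a proof. Two concrete soft spots beyond the missing case analysis: (i) the equivalence you invoke needs $H$ to be simple and bridgeless, and your dismissal of the parallel-pair case is off --- a direct arc between two consecutive boundary vertices would create a double edge in $H$ and violates no axiom of strict confluent drawings; it is excluded only because the construction of $D'$ inserts a junction next to every vertex of degree at least $2$, and connectedness of $G$ with $n\ge 3$ forces one endpoint of any such direct edge to have degree at least $2$; (ii) in the case where both cut edges are arcs of $D'$, the boundary walks of the two faces are not smooth paths, so ``chasing smooth continuations'' does not obviously yield two distinct trails between a common pair of vertices without further argument.

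The paper's proof avoids the case analysis entirely. It exhibits, for every two vertices of $G$, three internally disjoint paths in $H$: two along the boundary cycle, and one through the interior of $D'$ obtained by concatenating the trails along a path in $G$ and bypassing each intermediate vertex of $G$, which is possible because such a vertex has only a single incident arc. Consequently any cut of size at most two leaves all vertices of $G$ in one 3-connected component, and a separated piece containing only junctions is impossible: its arcs would either lie on no smooth path between vertices of $G$ (forbidden) or would create a second smooth path between some pair of vertices (also forbidden). If you wish to rescue your edge-cut route, the key missing observation is that an edge cut meets the boundary cycle in an even number of edges; this splits the analysis into ``both cut edges on the boundary'' (refuted by the interior path) and ``neither on the boundary'' (all of $G$ on one side, reducing to the paper's final step). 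Either way, the argument you would need to supply is essentially the paper's.
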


\begin{proof}
Planarity and 3-regularity follow immediately from the construction of $H$. Every two vertices of $G$ are connected by three vertex-disjoint paths in $H$: at least one (not necessarily a smooth path) through $D$, using the assumption that $G$ is connected, and two more around the boundary of the disk. Therefore, if $H$ were not 3-vertex-connected, only one of its 3-connected components could contain vertices of $G$. The other components would either contain components of $D$ that are not part of any smooth path between vertices of $G$ (forbidden in a strict confluent drawing) or would contain more than one smooth path between the same sets of vertices (also forbidden).
\end{proof}

\begin{theorem}
Let $D$ be an outerplanar strict confluent drawing of a graph $G$, given topologically but not geometrically. Then we can construct an outerplanar strict confluent drawing of $G$ in which each arc of the drawing is represented by a smooth curve that is either a circular arc or the union of two circular arcs.
\end{theorem}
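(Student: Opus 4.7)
The plan is to realize the topological drawing concretely by invoking a circle packing theorem for $H$ and using the resulting geometry to produce arcs consisting of at most two circular pieces. Because $H$ is planar, $3$-regular, and $3$-vertex-connected by Lemma~\ref{lem:3-regular}, the Koebe--Andreev--Thurston theorem in its primal--dual form (Brightwell--Scheinerman) applies: there exist primal circles $C_v$, one per vertex of $H$, and dual circles $C_f$, one per face, such that primal circles of adjacent vertices are tangent, dual circles of adjacent faces are tangent at the same point, and incident primal--dual pairs cross orthogonally at that tangency point.

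First I would pick the outer face of $H$ to be the one whose boundary consists of the boundary-connector edges added in the definition of $H$, and apply a M\"obius transformation normalizing its dual circle to be the boundary of the target drawing disk; orthogonality then places the $G$-vertices on this boundary in the prescribed cyclic order. Each interior vertex of $H$, i.e.\ a junction $j$ of $D'$, is placed at a chosen point on its primal circle $C_j$, and the common tangent line required by the three arcs incident to $j$ is taken to be the tangent line of $C_j$ at $j$. For each edge $e=uv$ of $H$ corresponding to an arc of $D'$, let $f$ and $g$ be its two adjacent faces; by the primal--dual packing, the point $p_e$ at which $C_u$ and $C_v$ are tangent coincides with the tangency point of $C_f$ and $C_g$. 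I would route the arc as a pair of circular arcs, the first from $u$ to $p_e$ and the second from $p_e$ to $v$, each chosen to match the prescribed tangent at its outer endpoint and to share, at $p_e$, the common tangent of the two dual face circles; the concatenation is therefore smooth. Whenever the endpoint tangents at $u$ and $v$ happen to be compatible, this degenerates to a single circular arc, explaining the ``at most two'' bound.

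The main obstacle will be showing that the resulting arcs yield a faithful, non-self-intersecting drawing of $G$. The key tool is that the closed disks $\overline{C_f}$ and $\overline{C_g}$ bounded by adjacent dual circles meet only at the tangency point $p_e$, which allows each arc to be confined to a well-defined two-disk region; orthogonality of primal and dual circles at tangency points provides enough freedom to keep different arcs inside a common face from intersecting away from their shared endpoints. One must then argue that no spurious smooth path between two $G$-vertices arises, which follows from the uniqueness and planarity of the canonical diagram underlying $D$ together with the construction of $H$. Finally, undoing the auxiliary subdivisions that turned $D$ into $D'$ suppresses degree-two junctions at which the merging arcs already share a tangent, so the ``at most two circular arcs per arc'' bound carries over from $D'$ to $D$, completing the proof.
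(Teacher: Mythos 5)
Your setup is on the right track---Lemma~\ref{lem:3-regular} justifying a Koebe--Andreev--Thurston packing, a M\"obius normalization of the outer face, and smoothness obtained at circle tangency points---but the arc-routing step has a genuine gap. You propose to draw each half of an arc as a \emph{single} circular arc from a junction point $u$ to the tangency point $p_e$ with prescribed tangent directions at \emph{both} endpoints. A circle has three degrees of freedom; interpolating two given points with given tangents at both is four conditions, so such a circular arc generically does not exist (one needs a biarc, which would already give four circular pieces per arc of $D'$ rather than two). There is also a tangent-direction confusion: you ask the two halves to meet at $p_e$ along the common tangent of the dual face circles $C_f,C_g$, but if the junction points live on the primal circles then the natural tangent at $p_e$ is that of $C_u$ and $C_v$, which is \emph{orthogonal} to the one you prescribe. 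Finally, placing all three arc-ends of a junction at ``a chosen point on $C_j$'' and routing them along or near $C_j$ risks overlapping arcs of the same circle, and you defer the non-crossing argument entirely.

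The idea your proposal is missing is that the packed circles' boundaries should \emph{be} the drawn arcs. The paper packs circles for the faces of $H$ only (the dual circles of your primal--dual packing); since $H$ is 3-regular and 3-connected its dual is a triangulation, so each vertex of $H$ (junction of $D'$) becomes a curvilinear triangular gap between three mutually tangent circles, whose corners are the tangency points for the three incident arcs. Deleting the side of each gap opposite the sharp angle leaves two circular arcs meeting tangentially, realizing the junction; vertices of $G$ sit in the gaps along the outer circle (Fig.~\ref{fig:confluent-circles}). Because every drawn piece is literally an arc of a packing circle, smoothness at tangency points and disjointness of distinct arcs are automatic, and counting the pieces contributed by the gaps at the two ends of each arc of $D'$ gives the ``at most two circular arcs'' bound directly rather than as a degenerate coincidence of endpoint tangents.
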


\noindent\textit{Proof.}
By the Koebe--Thurston--Andreev circle packing theorem, there exists a system $C$ of circles representing the faces of $H$, such that two circles are adjacent exactly when the corresponding faces share an edge. We may assume (by performing a M\"obius transformation if necessary) that the outer circle of this circle packing corresponds to the outer face of $H$. $C$ may be found efficiently (although not in strongly polynomial time) by a numerical iteration that quickly converges to the system of radii of the circles, from which their centers can also be computed easily~\cite{ColSte-CGTA-03,Moh-DM-93}.

\begin{wrapfigure}[20]{r}{0.5\textwidth}
    \vspace{-\baselineskip}
    \centering
    \includegraphics[scale=0.4] {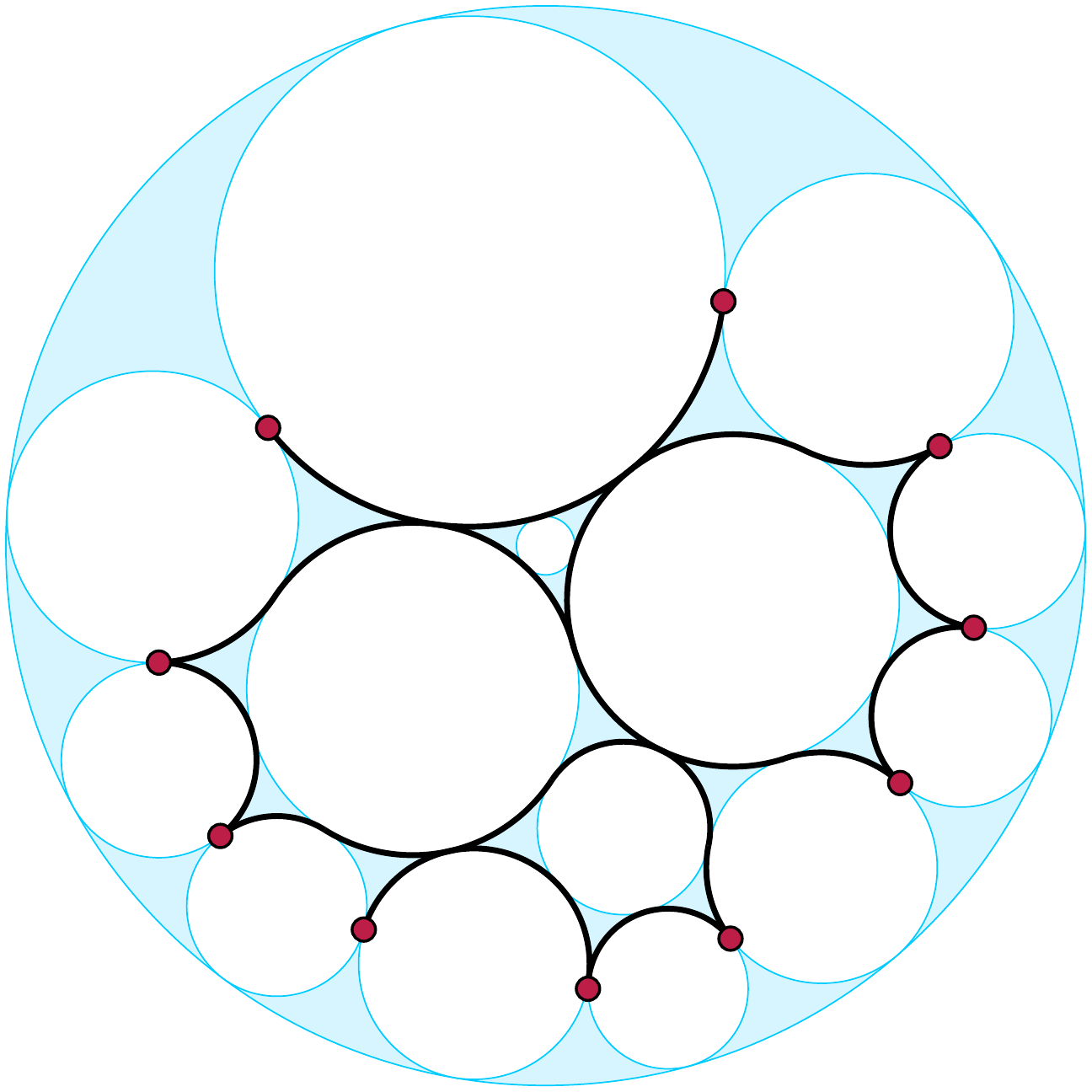}
	\caption{Construction of an outerplanar strict confluent drawing from a circle packing. The vertices of the drawing correspond to triangular gaps adjacent to the outer circle, and the junctions  correspond to the remaining triangular gaps.}
	\label{fig:confluent-circles}
\end{wrapfigure}
Each vertex of $G$ corresponds in $C$ to one of the triangular gaps between the outer circle and two other circles, and may be placed at the point of tangency of the two non-outer circles (one of the vertices of this triangle); see Fig.~\ref{fig:confluent-circles}.
The junctions in $D'$ lie at the meeting point of three faces of $H$, and correspond in $C$ to the remaining triangular gaps between three circles. A confluent drawing of $G$ may be formed by removing the outer circle, removing all circular arcs bounding the triangular gaps incident to the outer circle, and in each remaining triangular gap removing the arc that is on the other side of the sharp angle. The resulting drawing contracts some edges of $D'$ to form junctions with four incident arcs, but this does not affect the correctness of the drawing.
In the resulting drawing, arcs of the diagram that have merge points or vertices at both of their endpoints are drawn as two circular arcs (possibly both from the same circle); other arcs of the diagram are drawn as a single circular arc.\qed

\section {Conclusions}

We have shown that, in confluent drawing, restricting attention to the strict drawings allows us to completely characterize their complexity, and we have also shown that outerplanar strict confluent drawings with a fixed vertex ordering may be constructed in polynomial time. The most pressing problem left open by this research is to recognize the graphs that have outerplanar strict confluent drawings, without imposing a fixed vertex order. Can we recognize these graphs in polynomial time?

\medskip
\paragraph{Acknowledgements}
This work originated at Dagstuhl seminar 13151, \emph {Drawing Graphs and Maps with Curves}. D.E. was supported in part by the National Science Foundation under grants 0830403 and 1217322, and by the Office of Naval Research under MURI grant N00014-08-1-1015. M.L. was supported by the Netherlands Organisation for Scientific Research (NWO) under grant 639.021.123. M.N. received financial support by the `Concept for the Future' of KIT under grant YIG 10-209.

\small
\bibliographystyle{abbrv}
\bibliography{confluence}
\normalsize
\appendix\clearpage

\section{Combinatorial Complexity} \label {app:complexity}

We first assume that every vertex has degree one and every junction has degree three in the confluent drawing. Note that it is always possible to expand a confluent drawing to satisfy these restrictions. This expansion maintains the number of faces, and can only increase the number of arcs and junctions. Afterwards we will also consider junctions and vertices of higher degrees.
We also use the following lemma to reduce the number of cases we need to consider.

\begin {lemma} \label {lem:noloops}
  Let $D$ be a strict confluent drawing. $D$ cannot contain any smooth closed curves.
\end {lemma}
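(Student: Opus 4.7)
The plan is to assume toward contradiction that $D$ contains a smooth closed curve $\gamma$, and to produce two distinct smooth paths between the same pair of vertices, contradicting the uniqueness condition in the definition of a strict confluent drawing.

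First I would record two preliminary observations: $\gamma$ cannot pass through any vertex of $G$ (because every angle at a vertex is sharp, so no smooth continuation through a vertex is possible), and every arc of $\gamma$ must lie on at least one smooth edge-path (this is built into the definition of a strict confluent drawing). Pick any arc $a$ of $\gamma$ and let $P$ be an edge-path from some vertex $u$ to some vertex $v$ that uses $a$. Since neither $u$ nor $v$ lies on $\gamma$, one can identify the first junction $j_1$ where $P$ transitions from off-$\gamma$ to on-$\gamma$ (arriving on a non-$\gamma$ arc $b$ and continuing along a $\gamma$-arc) and, subsequently, the first junction $j_2$ where $P$ transitions back off $\gamma$ (arriving along $\gamma$ and exiting via a non-$\gamma$ arc $c$). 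Between $j_1$ and $j_2$ the path $P$ follows $\gamma$. A quick side-count at a degree-$3$ junction pins down $j_1 \neq j_2$: at such a junction, the unique non-$\gamma$ arc lies on one side of the tangent, and the smoothness constraint together with the fact that the two $\gamma$-arcs are on opposite sides rules out both entering and later re-exiting through the same off-$\gamma$ arc after a nontrivial excursion along $\gamma$.

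Now define the alternative path $P'$: follow $P$ up to its first arrival at $j_2$, but at $j_2$ continue along $\gamma$ in the same direction instead of peeling off via $c$; traverse the remainder of $\gamma$ all the way around (passing through $j_1$ en route) and return to $j_2$ along its other $\gamma$-arc; then exit $j_2$ via $c$ to $v$ exactly as $P$ did. Smoothness at each junction of $P'$ follows from local tangent-side checks: transitions that stay on $\gamma$ inherit smoothness from $\gamma$ itself; in particular, at the extra visit to $j_1$, the two $\gamma$-arcs lie on opposite sides of the tangent, so the passage through $j_1$ is smooth. The entry at $j_1$ via $b$ and the final exit at $j_2$ via $c$ are inherited from $P$. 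Finally $P'$ visits no vertex other than $u$ and $v$, because the extra loop lives entirely on $\gamma$ and $\gamma$ contains no vertices.

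Thus $P$ and $P'$ are two geometrically distinct smooth paths from $u$ to $v$ (they differ by one full traversal of $\gamma$), contradicting the strictness of $D$. The main delicate step, and where the degree-$3$ assumption earns its keep, is the local case analysis at the junctions where $P$ meets or leaves $\gamma$: verifying $j_1 \neq j_2$ and producing the smooth continuation along $\gamma$ on the second visit to $j_1$. Once those local verifications are in place, the global contradiction follows immediately.
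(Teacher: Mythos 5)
Your proposal is correct and is essentially the paper's argument in expanded form: the paper's proof simply observes that such a curve ``would either be irrelevant or would allow multiple paths (looping around the curve multiple times) for some pair of vertices,'' and your construction of $P'$ by splicing one full traversal of $\gamma$ into $P$ at $j_2$, together with the observation that every arc must lie on some edge-path, is exactly that idea made precise. One small remark: the degree-$3$ assumption you lean on is not part of the lemma's hypotheses and is not actually needed --- the smoothness checks only use the two-sides-of-the-tangent structure at an arbitrary junction, and the claim $j_1\neq j_2$ plays no role in the splicing argument.
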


\begin {proof}
  Such a curve would either be irrelevant or would allow multiple paths (looping around the curve multiple times) for some pair of vertices.
\end {proof}

The following simple lemma is crucial for our bounds.
\begin{lemma}\label{lem:threesharp}
Every face in a strict confluent drawing must have at least three sharp angles.
\end{lemma}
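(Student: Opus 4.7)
My plan is to argue by contradiction: suppose some face $F$ has $s \le 2$ sharp angles on its boundary. By the opening reduction of this appendix, I may assume every vertex has degree~1 and every junction has degree~3. At each degree-3 junction, the three incident arcs share a common tangent line with one arc (the \emph{singleton}) on one side and two arcs (the \emph{pair}) on the other; consequently exactly one of the three face-angles at the junction is a sharp cusp between the two pair arcs, and the other two are smooth joins pairing the singleton with each pair arc. Each degree-1 vertex contributes a single sharp angle to its unique incident face. The key structural fact I will repeatedly use is that the singleton at a junction is the common smooth partner of both pair arcs, so a smooth continuation at a cusp corner of $\partial F$ is always through the singleton.

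If $s=0$, every angle on $\partial F$ is smooth, so $\partial F$ is a smooth closed curve in $D$, contradicting Lemma~\ref{lem:noloops}. If $s=1$ with the sharp corner at a vertex $v$, the boundary of $F$ already traces the unique arc $\alpha$ at $v$ outward and back while passing smoothly through every other junction; this gives a smooth walk from $v$ to itself, forbidden by strictness. If $s=1$ with the sharp corner at a junction $j$, I prepend and append the singleton arc $a$ at $j$ to obtain a smooth walk from the other endpoint $Q$ of $a$ back to itself. If $Q$ is a vertex we are done; otherwise I extend symmetrically outward at $Q$ on both ends, using the unique smooth partner on the pair side and a fixed consistent branch on the singleton side. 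Finiteness of $D$ forces this deterministic extension either to terminate at a vertex (producing a smooth self-loop, forbidden by strictness) or to close into itself (producing a smooth closed curve, forbidden by Lemma~\ref{lem:noloops}).

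If $s=2$, let $c_1, c_2$ be the two sharp corners and let $\gamma_1, \gamma_2$ be the two smooth boundary arcs connecting them. The same extension trick turns each $\gamma_i$ into a smooth walk between the same pair of endpoints, and iterating extensions with consistent choices at any singleton-side junctions produces two smooth walks between the same pair of vertices $V_1, V_2$; their middle portions traverse $\gamma_1$ and $\gamma_2$ respectively and therefore use different arcs, so we have two distinct smooth paths between $V_1$ and $V_2$, contradicting strictness. The main technical subtlety is the $s=1$ junction subcase: I must verify that the two symmetric extensions really reach the same outer point at each step, so that the extension terminates in a genuine self-loop or cycle rather than in an inconclusive walk. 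This follows because smooth continuation is symmetric under reversal of traversal direction, so consistent choices on both sides produce mirror extensions visiting identical sequences of junctions, and hence they must terminate at the same vertex or trace the same cycle.
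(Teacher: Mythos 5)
Your proof is correct and follows essentially the same route as the paper's: extend each sharp corner that lies at a junction outward along smooth continuations until a vertex is reached (possible because Lemma~\ref{lem:noloops} forbids smooth closed curves), and then observe that zero, one, or two sharp angles would force, respectively, a smooth closed curve, a self-loop, or two smooth paths between the same pair of vertices. The paper compresses the extension step into a single sentence; your explicit treatment of the mirror-symmetric extension in the $s=1$ junction subcase is the same idea spelled out in more detail.
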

\begin{proof}
Consider a face with less than three sharp angles. If a sharp angle is at a junction, then the arc opposite of the sharp angle must eventually reach a vertex, since by Lemma~\ref {lem:noloops} the drawing cannot have loops. Hence we can assume that there is a vertex at each sharp angle. But then the face must have at least three vertices, for otherwise it contains a self-loop (one vertex) or multiple paths between the same pair of vertices (two vertices).
\end{proof}
In the following, $n$ is the number of vertices, $k$ is the number of junctions, $m$ is the number of arcs, $F$ is the number of faces, and $c$ is the number of sharp angles. We first bound the combinatorial complexity of strict confluent drawings.
\begin{lemma}\label{lem:boundplanar}
Every strict confluent drawing has at most $2n - 4$ faces, $5n - 12$ junctions, and $8n - 18$ arcs.
\end{lemma}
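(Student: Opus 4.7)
The plan is to combine three ingredients: the reduction to uniform degrees noted at the start of this appendix, Euler's formula for the planar embedding underlying the diagram, and the sharp-angle lower bound from Lemma~\ref{lem:threesharp}. Since the expansion to degree-$1$ vertices and degree-$3$ junctions preserves the number of faces and only increases $k$ and $m$, it suffices to establish the stated bounds under the uniform-degree assumption; the bounds then transfer back to the original drawing.

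In that restricted setting I would assemble three linear relations in $n,k,m,F$. First, the handshake identity gives $2m = n + 3k$. Second, applied to the planar embedding with $C \ge 1$ connected components, Euler's formula gives $n + k - m + F = 1 + C$; combined with the handshake identity this expresses $F$ as a linear function of $n$, $k$ and $C$. Third, each degree-$1$ vertex contributes exactly one sharp angle, and each degree-$3$ junction contributes exactly one sharp angle (the tangency condition forces all three arcs to share a single tangent line, putting two arcs on one side and one on the other, so two of the three consecutive angles are flat). Summing sharp-angle incidences over faces and applying Lemma~\ref{lem:threesharp} therefore yields $3F \le n + k$.

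Substituting the Euler expression for $F$ into $3F \le n+k$ gives an upper bound for $k$ linear in $n$ and $C$; specialising to $C \ge 1$ produces $k \le 5n-12$. Plugging this back into the Euler expression gives $F \le 2n-4$, and then the handshake identity gives $m \le 8n-18$. The bounds are tight for $C = 1$ and only become stricter for $C \ge 2$, so the possible disconnectedness of the drawing requires no special handling beyond keeping $C$ as a non-negative quantity in the Euler step.

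The only place where care is really needed is the sharp-angle count at a junction in the uniform-degree case, since this is what feeds Lemma~\ref{lem:threesharp} into the Euler computation; once that is pinned down via the tangency condition, everything else is routine algebra. A minor bookkeeping point is to confirm that the uniform-degree expansion actually preserves $F$ (splitting a high-degree vertex or junction into smaller pieces adds arcs and junctions inside an existing face without creating new faces), but this is immediate from the local nature of the expansion.
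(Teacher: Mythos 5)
Your proof is correct and follows essentially the same route as the paper's: reduce to degree-$1$ vertices and degree-$3$ junctions, combine the handshake identity $2m=n+3k$, the sharp-angle count $c=n+k$ with Lemma~\ref{lem:threesharp}, and Euler's formula. Your only addition is tracking the number of connected components explicitly in Euler's formula, which the paper elides by writing $n+k-m+F=2$; as you note, this only tightens the bounds.
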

\begin{proof}
By double counting we get that $2m = n + 3k$. The total number of sharp angles is $c = n + k$, and Lemma~\ref{lem:threesharp} implies that $c \geq 3F$, so that $n + k \geq 3F$. By combining the above relations with Euler's formula $n + k - m + F = 2$, we directly obtain that $F \leq 2n - 4$, $k \leq 5n - 12$, and $m \leq 8n - 18$.
\end{proof}
We can obtain similar bounds for outerplanar strict confluent drawings. Here we use $F_{in}$ to denote the number of internal faces ($F_{in} = F - 1$).
\begin{lemma}\label{lem:boundouterplanar}
Every outerplanar strict confluent drawing has at most $n - 2$ internal faces, $3n - 6$ junctions, and $5n - 9$ arcs.
\end{lemma}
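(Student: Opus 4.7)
The plan is to replay the proof of Lemma~\ref{lem:boundplanar} but with a sharper lower bound on the total number of sharp angles, exploiting the fact that in the outerplanar setting all $n$ vertices lie on the boundary of one face. As in that lemma, I first expand the drawing so that every vertex has degree~$1$ and every junction is binary, which preserves the number of faces and can only increase the number $k$ of junctions and the number $m$ of arcs, so any bound proved on the expanded drawing transfers back to the original. In this normalized drawing, double counting of arc endpoints gives $2m = n + 3k$, and summing sharp angles gives $c = n + k$ (each degree-$1$ vertex contributes one sharp angle, each binary junction contributes $3 - 2 = 1$). Combining these two relations with Euler's formula $n + k - m + F = 2$ yields $F_{in} = F - 1 = 1 - n/2 + k/2$.

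The new ingredient compared with Lemma~\ref{lem:boundplanar} is an improved lower bound on $c$ tailored to the outerplanar case. Each of the $n$ vertices lies on the disk boundary with its unique incident arc pointing into the disk, so its single sharp angle lies in the face that surrounds the drawing from the outside and reaches into the interior through the gaps between consecutive boundary vertices, namely the outer face. Hence the outer face alone contributes at least $n$ sharp angles. Applying Lemma~\ref{lem:threesharp} separately to each of the $F_{in}$ internal faces then gives the improved inequality $c \geq n + 3 F_{in}$.

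Substituting $c = n + k$ and $F_{in} = 1 - n/2 + k/2$ into this inequality and simplifying yields $k \leq 3n - 6$, from which the other two bounds follow by direct substitution: $F_{in} \leq 1 - n/2 + (3n - 6)/2 = n - 2$ and $m = (n + 3k)/2 \leq 5n - 9$. Since the expansion step only inflates $k$ and $m$ while preserving the face count, the same bounds hold for the original drawing.

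The main point requiring care is the claim that every degree-$1$ vertex contributes its sharp angle to the outer face rather than to an internal face. I would verify this by a short topological argument: a degree-$1$ vertex sits on the disk boundary with its incident arc pointing inward, and walking around the vertex from one side of that arc to the other traces an open sector that reaches the exterior of the disk (which is connected to the interior via the gaps between consecutive boundary vertices), so the sector lies entirely in the outer face. A minor additional check is that the expansion step preserves outerplanarity, which it does because the splitting of high-degree vertices and junctions takes place in the interior of the disk and does not disturb the cyclic order of boundary vertices.
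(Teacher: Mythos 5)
Your proof is correct and follows essentially the same route as the paper: the paper likewise normalizes to degree-$1$ vertices and binary junctions, uses $2m=n+3k$ and Euler's formula, and obtains $k\geq 3F_{in}$ by observing that all $n$ vertex sharp angles lie in the outer face so that only junction sharp angles count toward the internal faces. Your phrasing via $c\geq n+3F_{in}$ with $c=n+k$ is just an equivalent bookkeeping of the same inequality.
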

\begin{proof}
By double counting we get that $2m = n + 3k$. The total number of sharp angles in internal faces is $c = k$ (every vertex is on the outer face), and Lemma~\ref{lem:threesharp} implies that $c \geq 3F_{in}$, so that $k \geq 3F_{in}$. By combining the above relations with Euler's formula $n + k - m + F_{in} = 1$, we directly obtain that $F_{in} \leq n - 2$, $k \leq 3n - 6$, and $m \leq 5n - 9$.
\end{proof}

\smallskip\noindent
{\bf Complex junctions.} If we allow junctions and vertices to have higher degree, then we can consider \emph{minimal drawings}, that is, confluent drawings with as few arcs and junctions as possible. For strict confluent drawings that are minimal we can obtain stronger upper bounds on the combinatorial complexity. Consider a junction of degree three. We call the arc opposite of the sharp angle the \emph{free arc}. Note that, regardless of what is on the other side of a free arc, we can always contract a free arc without changing the underlying graph represented by the confluent drawing. Furthermore, contracting a free arc will not influence other free arcs. That means that we can contract all free arcs simultaneously. Now consider the unique arc incident to a vertex. If this arc is not free, then the sharp angle of the adjacent junction must be bounded by this arc. In this case we call the respective sharp angle \emph{pointless}. This angle lies in the face that contains the vertex, but does not act as a ``corner'' of that face, and hence does not count towards the at least three sharp angles necessary for a face in a strict confluent drawing, as required by Lemma~\ref{lem:threesharp}. Let $k'$ be the total number of pointless sharp angles. Following the arguments of the proof of Lemma~\ref{lem:boundplanar} and instead using $n + k - k' \geq 3F$, we obtain bounds $k \le 5n - 12 - 2k'$ and $m \le 8n - 18 - 3k'$ for junctions and arcs, respectively. Note that at least $n - 2k'$ vertices must be adjacent to a free arc. Since every arc can be shared by at most two junctions/vertices, there must be at least $(k + n - 2k')/2$ free arcs. Hence there can be at most $k - (k + n - 2k')/2 \le 2n - 6$ junctions and at most $m - (k + n - 2k')/2 \le 5n - 12$ arcs (using $2m = n + 3k$) in a minimal strict confluent drawing.

For outerplanar strict confluent drawings, note that all pointless sharp angles lie in the outer face. Using the same reasoning as above with the bounds of Lemma~\ref{lem:boundouterplanar} (using $k - k' \geq 3F_{in}$), we obtain the following lemma.
\begin{lemma}\label{lem:boundminimal}
Every strict confluent drawing that is minimal has at most $2n - 6$ junctions and $5n - 12$ arcs. Every outerplanar strict confluent drawing that is minimal has at most $n - 3$ junctions and $3n - 6$ arcs.
\end{lemma}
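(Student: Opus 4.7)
The plan is to exploit minimality together with the two structural features highlighted just above the lemma statement: \emph{free arcs} (the arc opposite the sharp angle at a degree-three junction) and \emph{pointless sharp angles} (those bounded on one side by the unique arc incident to a degree-one vertex). The strategy is to begin from the already-proved Lemma~\ref{lem:boundplanar} (and Lemma~\ref{lem:boundouterplanar} for the outerplanar case), refine the sharp-angle count $c \geq 3F$ from Lemma~\ref{lem:threesharp} to account for pointless angles, and then use the minimality to force a lower bound on the number of free arcs, which must be subtracted from the junction/arc counts.

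First I would recall that any strict confluent drawing may be expanded so that all junctions have degree three and all vertices have degree one, without decreasing face count or the relevant counts; this is used to set up the equations $2m = n + 3k$ and $c = n + k$ as in the proof of Lemma~\ref{lem:boundplanar}. Let $k'$ denote the number of pointless sharp angles. The key observation is that each pointless angle sits inside a face but does not serve as a corner of it in the sense required by Lemma~\ref{lem:threesharp}, so the inequality used in Lemma~\ref{lem:boundplanar} strengthens to $n + k - k' \geq 3F$. Plugging this into Euler's formula $n + k - m + F = 2$ together with $2m = n + 3k$ sharpens the prior bounds to $k \leq 5n - 12 - 2k'$ and $m \leq 8n - 18 - 3k'$.

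Next I would invoke minimality to count free arcs. The observation is that every degree-one vertex not adjacent to a pointless angle must be adjacent to a free arc, so at least $n - 2k'$ vertices contribute a free arc; together with the degree-three junctions each potentially contributing one, a double-counting argument (each arc incident to at most two such endpoints) gives at least $(k + n - 2k')/2$ free arcs. By minimality, no free arc may be present: each such arc can be contracted without changing the represented graph and, crucially, these contractions are independent, so they can be carried out simultaneously. Therefore a minimal drawing satisfies $k \leq (5n - 12 - 2k') - (k + n - 2k')/2$ and $m \leq (8n - 18 - 3k') - (k + n - 2k')/2$, and using $2m = n + 3k$ once more yields the claimed bounds $k \leq 2n - 6$ and $m \leq 5n - 12$ after simplification (the $k'$ terms cancel favorably). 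For the outerplanar case the argument is identical except that all pointless sharp angles lie in the outer face, which is already excluded from $F_{in}$, so one starts from Lemma~\ref{lem:boundouterplanar} with $k - k' \geq 3F_{in}$ and repeats the free-arc subtraction to obtain $k \leq n - 3$ and $m \leq 3n - 6$.

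The main obstacle I anticipate is justifying, carefully, that the free-arc contractions may be performed in parallel and that the resulting drawing remains a valid strict confluent drawing representing the same graph; in particular, one must check that a free arc's endpoints are not incident to another free arc being contracted in a conflicting way, and that contractions do not create self-loops or duplicate smooth paths in violation of strictness. A short case analysis on the local structure at a free arc (which is opposite a sharp angle at a degree-three junction) should show that contraction merely merges two tangent arcs into a single smooth arc, preserving both strictness and the represented graph, so the parallel contraction is well-defined and the minimality argument goes through.
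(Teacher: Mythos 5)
Your proposal reproduces the paper's argument essentially step for step: the same expansion to degree-three junctions and degree-one vertices, the same refinement $n+k-k'\geq 3F$ (resp.\ $k-k'\geq 3F_{in}$) of Lemma~\ref{lem:threesharp} via pointless sharp angles, the same lower bound of $(k+n-2k')/2$ on the number of free arcs, and the same subtraction of that quantity from the bounds of Lemmas~\ref{lem:boundplanar} and~\ref{lem:boundouterplanar}. The only blemish is notational: in your inequality $k \leq (5n-12-2k') - (k+n-2k')/2$ the two occurrences of $k$ denote different quantities (the minimal drawing's junction count on the left, the expanded drawing's on the right); written as $k_{\mathrm{min}} \leq k - (k+n-2k')/2$ together with $k \leq 5n-12-2k'$, it simplifies to $2n-6$ exactly as in the paper.
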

Although we cannot prove a tight bound for strict confluent drawings, the above bound for outerplanar strict confluent drawings is in fact tight.
\begin{lemma}
Every outerplanar strict confluent drawing of a clique on $n \geq 3$ vertices has $n-2$ internal faces, and at least $n - 3$ junctions and $3n - 6$ arcs.
\end{lemma}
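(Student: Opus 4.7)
The plan is to derive all three bounds from the single claim that in any outerplanar strict confluent drawing of $K_n$, every internal face has exactly three sharp angles.  Working in the expanded drawing (as in the proof of Lemma~\ref{lem:boundouterplanar}), this turns the inequality $3F_{in}\le k_{\mathrm{exp}}$ coming from Lemma~\ref{lem:threesharp} into an equality, which together with Euler's formula forces $F_{in}=n-2$ and $k_{\mathrm{exp}}=3n-6$, and hence (unwinding the expansion) $m=2n-3+k$ for the original drawing.  The remaining bound $m\ge 3n-6$ is then equivalent to $k\ge n-3$.

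To prove the tightness claim I would argue by contradiction.  Suppose some internal face $f$ has at least four sharp angles, at features $p_1,\ldots,p_4$ in cyclic order around $f$.  Between consecutive $p_i$'s the boundary of $f$ is a smooth sub-path of arcs; by strictness each such sub-path extends uniquely, through non-sharp junction corners in neighbouring faces, to a trail between two vertices of $G$.  This lets me associate four distinct vertices $u_1,\ldots,u_4$ of $G$ (in cyclic order around $f$) to the sharp features so that the four sides of $f$ carry the four $K_4$-cycle trails $u_iu_{i+1}$.  Because $K_n$ also contains the diagonals $u_1u_3$ and $u_2u_4$, these too must be realized by unique trails; but neither diagonal can enter $f$ (a face contains no arcs), and routing them through the rest of the drawing either starts along one of the four sides of $f$ (thereby duplicating the corresponding cycle trail and violating strictness) or forces a crossing of the two diagonals, which in the outerplanar disk is impossible for vertices $u_1,\ldots,u_4$ in cyclic order.

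The main obstacle I anticipate is keeping tight control of which vertex $u_i$ corresponds to each sharp feature $p_i$ when $p_i$ is the sharp side of a junction rather than a vertex of $G$: then the extending smooth path from $f$'s boundary may pass through several non-sharp junction corners in other faces before reaching a vertex, and one must rule out a rerouting of the diagonals that dodges the duplication/crossing dichotomy above.  I expect to use the uniqueness of the canonical diagram from Section~4 to pin down the correspondence between sharp features and vertices.  Once the tightness claim is in hand, the remaining inequality $k\ge n-3$ can be obtained by a counting argument over trail trees: each of the $n$ vertices of $K_n$ is the root of a trail tree with $n-1$ leaves whose internal branchings occur only at degree-three junctions on whose trunk side the vertex lies, and summing the branchings across all vertices produces at least $n-3$ distinct degree-three junctions.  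The base case $n=3$ is a triangle with $k=0=n-3$ and $F_{in}=1=n-2$.
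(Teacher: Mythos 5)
Your overall plan diverges from the paper's: the paper finds one internal face with three junctions, observes that for a clique no outside connection can attach to a side of that face, and decomposes the drawing into three sub-drawings of smaller cliques, solving a recurrence $T(n)=y+T(n_1+1)+T(n_2+1)+T(n_3+1)$ that yields all three bounds at once. Your plan instead gets the face count from tightness in Lemma~\ref{lem:threesharp} and the junction/arc counts from a separate counting argument, and that second half does not work. The step that fails is the claim that ``summing the branchings across all vertices produces at least $n-3$ distinct degree-three junctions.'' Summing branchings over the $n$ trail trees gives no control on \emph{distinctness}: one junction can serve as a branching node in many different vertices' trees (in the six-arc drawing of $K_4$ the single junction is a branch point for all four trees), and conversely one high-degree junction can supply all the branchings of a given vertex's tree, so an individual tree certifies as few as one junction. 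Note also that your Euler bookkeeping only converts the problem rather than solving it: once $F_{in}=n-2$ is known, Euler's formula gives $m=2n-3+k$ for the original (unexpanded) drawing, so $m\ge 3n-6$ is exactly equivalent to $k\ge n-3$; and the expanded drawing's counts $k_{\mathrm{exp}}=3n-6$, $m_{\mathrm{exp}}=5n-9$ are \emph{upper} bounds for the original drawing (expansion only adds junctions and arcs), not lower bounds. Some version of the paper's recursive decomposition, which produces a fresh junction supply at each level, seems unavoidable here.

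Your first step --- that every internal face of a drawing of $K_n$ has exactly three sharp angles --- is true, and the crossing-diagonals mechanism you describe is the right reason (it is the same mechanism the paper uses for the wheel example in the introduction). But as written it is incomplete in ways beyond the one you flag: when a sharp corner $p_i$ is a junction, the smooth continuation beyond it reaches a whole \emph{set} $U_i$ of vertices, so you must show these four sets are nonempty (via Lemma~\ref{lem:noloops}), pairwise disjoint (else strictness already fails), and arranged on the disk boundary in the cyclic order matching their order around the face, before the Jordan-curve argument forces the two diagonal trails to cross at a junction and thereby duplicate a side trail. Moreover, to turn $3F_{in}\le k_{\mathrm{exp}}$ into an equality you must additionally rule out junctions whose sharp angle lies in the \emph{outer} face, which you do not mention. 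These obligations are all dischargeable for cliques, but the appeal to uniqueness of canonical diagrams does not by itself discharge them.
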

\begin{proof}
It is easy to see that every outerplanar strict confluent drawing of a graph with a triangle must contain an internal face $f$ with three junctions. Also, since the underlying graph is a clique, the drawing cannot contain outside connections to an arc of $f$, as the origin of this connection will not be able to reach the junction of $f$ opposite of the arc. Hence, every vertex must reach $f$ through one of the three junctions. This means we can subdivide the drawing into three parts, each a clique including one of the junctions of $f$ of size at least two. This leads to the following general recurrence relation
\begin{align*}
T(2) &= x\\
T(n) &= y + T(n_1 + 1) + T(n_2 + 1) + T(n_3 + 1)\\
&\text{ where } n = n_1 + n_2 + n_3 \text{ and } n_1,n_2,n_3 \geq 1
\end{align*}
This recurrence solves to $T(n) = x (2n - 3) + y (n-2)$. Note that every arc of the face $f$ must be present in the drawing and cannot be contracted. For the number of internal faces we use $(x,y)=(0,1)$ to obtain $n-2$ internal faces. For the number of arcs we use $(x,y)=(0,3)$ to obtain $3n - 6$ arcs. Finally, for the number of junctions/vertices we use $(x,y)=(1,0)$ to obtain $2n - 3$ junctions/vertices, which implies $n - 3$ junctions. For the last two bounds it is important that every subproblem on two elements involves at least one junction, which is true if we start with $n \geq 3$ vertices.
\end{proof}

\section{Properties of Canonical Diagrams} \label {app:canonical}

\begin{lemma}
\label{lem:consecutive}
In every outerplanar strict confluent drawing or canonical diagram, in which each vertex has at least one incident edge, there must be a pair of  vertices that are consecutive on the outer face of the drawing and adjacent in the corresponding graph. If there are at least three vertices, then there must be at least two such pairs.
\end{lemma}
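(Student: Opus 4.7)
The plan is to combine a sharp-angle count with a pigeonhole argument on the boundary walk of the outer face of $D$. First dispose of $n=2$ trivially: both vertices have degree at least one and the only candidate for adjacency is the other vertex, so they are adjacent and form the required consecutive pair.

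For $n\ge 3$, let $D$ be the drawing (or canonical diagram), and write $n,k,m,F_{\mathrm{in}}$ for its numbers of vertices, junctions, arcs and internal faces, and $\sigma=\sum_v d_v$ for the sum of drawing-degrees. Euler's formula on the plane embedding gives $F_{\mathrm{in}}=1+m-k-n$, and since every junction has drawing-degree at least three, $\sigma\le 2m-3k$. Classify sharp angles by the face containing them: at every vertex the exterior ``wrap-around'' sector is always in the outer face, so vertices contribute at most $\sigma-n$ sharp angles to internal faces. Each junction contributes one sharp angle; let $k_{\mathrm{out}}$ count those whose sharp angle lies in the outer face, so junctions contribute $k-k_{\mathrm{out}}$ sharp angles to internal faces. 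Lemma~\ref{lem:threesharp} (together with the stronger four-sharp-angles condition for marked faces in the canonical-diagram case) gives at least three sharp angles per internal face, so
\[
3F_{\mathrm{in}}\;\le\;(\sigma-n)+(k-k_{\mathrm{out}}).
\]
Substituting the formula for $F_{\mathrm{in}}$, the bound on $\sigma$, and using $F_{\mathrm{in}}\ge 0$ (equivalently $m\ge k+n-1$), a short calculation simplifies to $k_{\mathrm{out}}\le n-2$.

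Next I would apply this bound to the boundary walk of the outer face of $D$. Because every vertex has at least one incident arc, the walk visits each vertex at least once, producing at least $n$ vertex-to-vertex segments; the walk makes a sharp turn at a junction precisely when that junction's unique sharp angle lies in the outer face, so the walk contains at most $k_{\mathrm{out}}\le n-2$ sharp turns in total. Pigeonhole therefore produces at least two vertex-to-vertex segments containing no sharp turn, and each such segment is itself a smooth path (in the canonical-diagram case, a trail that does not cross any marked face) in $D$ between its two endpoints, certifying their adjacency in $G$.

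The main subtlety I expect is ensuring that the two smooth segments identified correspond to \emph{distinct} consecutive pairs rather than to the same pair traversed in opposite directions. For 2-connected drawings this is immediate because each consecutive pair on the outer face corresponds to exactly one walk segment. In the presence of cut vertices, each cut vertex contributes an additional walk segment per extra outer-face sector at it, so the number of walk segments grows to at least $n+1$ and pigeonhole yields at least three smooth segments; since an unordered pair of vertices can account for at most two walk segments (one per direction), three smooth segments must yield at least two distinct adjacent consecutive pairs, completing the proof.
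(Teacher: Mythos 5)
Your counting step is essentially sound (with the usual caveat that you must either assume binary junctions or count $d_j-2$ sharp angles per junction, and that Lemma~\ref{lem:threesharp} has to be re-verified for canonical diagrams, whose faces are not literally faces of a strict confluent drawing), and the bound $k_{\mathrm{out}}\le n-2$ on outer-face sharp junction angles does follow. But the final step has a genuine gap: a smooth segment of the outer-face boundary walk connects two consecutive \emph{visits} of the walk, and these need not be consecutive \emph{vertices} in the circular order on the disk boundary, which is what the lemma asserts. Concretely, take vertices $1,2,3,4$ in circular order with edges $\{1,3\},\{2,3\},\{3,4\}$ drawn as direct arcs. Every vertex has an incident edge, there are no junctions, and all six boundary-walk segments are smooth; but one of them runs along the arc $13$ directly from vertex $1$ to vertex $3$, skipping over vertex $2$, so it ``certifies'' the pair $\{1,3\}$, which is adjacent but not consecutive on the outer face. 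Your closing paragraph addresses only the multiplicity of pairs at cut vertices, not this consecutiveness issue, and the global count $k_{\mathrm{out}}\le n-2$ gives you no handle on \emph{which} segments are smooth, so you cannot conclude that any of them joins consecutive vertices without an additional argument (which would essentially have to be an innermost/extremal selection).

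That extremal selection is exactly the paper's entire proof, and it is much shorter: take an adjacent pair $u,v$ minimizing the number of vertices between them along the outer face; if some vertex $w$ lay between them, its trail to any neighbour would have to cross (hence share a junction with, hence merge smoothly into) the trail for $uv$, making $w$ adjacent to $u$ or $v$ and contradicting minimality, so $u,v$ are consecutive. Repeating the argument with the distance measured along the side of the circle not containing the first pair yields the second pair. If you want to salvage your approach, you would need to run your pigeonhole only over ``innermost'' smooth segments (those whose enclosed circle arc contains no further vertices), but at that point the extremal argument does all the work and the Euler-formula counting is no longer needed.
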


\begin{proof}
Let $uv$ be an adjacent pair of vertices that are as close as possible to each other on the outer face, as measured by the smaller of the two sequences of vertices from $u$ to $v$ and from $v$ to $u$ around the outer face.
Then $uv$ must be consecutive. For, if they were not consecutive, then there would be a vertex $w$ between them.  Any trail from $w$ to one of its neighbors would have to cross the trail for $uv$, causing $w$ to be adjacent to one of $u$ or $v$. But this would contradict the choice of $uv$ as being as close as possible for an adjacent pair.

Next, suppose that there are three or more vertices and let $u'v'$ be an adjacent pair of vertices that are as close as possible to each other by a different distance, the size of the sequence of vertices from $u'$ to $v'$ or $v'$ to $u'$ around the outer face, whichever of these two sequences does not contain the consecutive pair $uv$. Note that $u'v'$ cannot equal $uv$, because $uv$ is the pair with the largest distance by this measure. Again, we claim that $u'v'$ must be consecutive, for if they were separated by another vertex $w'$ then the trail from $w'$ to one of its neighbors would have to cross the trail for $u'v'$, causing $w'$ to be adjacent to one of $u'$ or $v'$ and leading to a contradiction.
\end{proof}

\begin{lemma}
\label{lem:marked-tree}
In every canonical diagram the dual graph of the set of marked faces forms a forest.
\end{lemma}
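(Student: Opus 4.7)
The plan is to argue by contradiction. Suppose the dual graph of marked faces contains a cycle $C = f_1, f_2, \ldots, f_k, f_1$ with $k \geq 2$, where consecutive marked faces $f_i$ and $f_{i+1}$ share an arc $a_i$ in the canonical diagram. I would realize $C$ as a closed curve $\gamma$ passing through the interior of each $f_i$ and transversally crossing each $a_i$, bounding a nonempty region $R$ in the plane.

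First I would establish that $R$ contains no vertex of $G$: since the canonical diagram is drawn inside a disk with all vertices on the outer boundary, while $\gamma$ lies strictly in the interior of this disk, the region $R$ also lies strictly in the interior. By the first canonical-diagram constraint (every arc is part of at least one trail), every arc in $R$ or on its boundary participates in some trail whose endpoints are vertices outside $R$. Such a trail must therefore cross $\gamma$, and it can only do so by jumping from one sharp corner to another across some marked face $f_i$ of $C$.

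Next I would exploit the cyclic structure of $C$ to produce a second trail between the same two endpoints. The idea is to fix a trail $T$ that enters $R$ across some $f_i$ and exits across some $f_j$ (possibly $i=j$), and then construct an alternative smooth curve $T'$ that routes around the opposite side of $R$, using different marked faces of $C$ together with different sharp-corner pairings inside them. Because $T$ and $T'$ would then follow different sequences of arcs and faces, their coexistence contradicts the uniqueness-of-trails condition built into the definition of a canonical diagram.

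The hard part will be constructing $T'$ rigorously and verifying that it is a valid trail with exactly the same endpoints as $T$. This requires tracking which sharp corners of each $f_i$ lie on which side of $\gamma$, and carefully choosing consistent sharp-corner pairings within each marked face of $C$ so that the resulting curve is smooth at every junction and crosses each marked face between two of its sharp corners. The degenerate case $k=2$ (two marked faces sharing two arcs) needs special treatment, since then $R$ is the thin region ``between'' the two shared arcs; here the requirement that every marked face have at least four sharp corners is essential, as it provides enough corners to realize an alternative pairing and thereby the rerouted trail $T'$.
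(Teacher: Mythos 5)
Your opening move (assume a dual cycle $C$ of marked faces, realize it as a closed curve enclosing a region $R$, and note that $R$ contains no vertices of $G$) matches the paper's setup, but the contradiction you aim for is different and, crucially, you never actually reach it. You want to violate trail-uniqueness by exhibiting two vertex-to-vertex trails, one passing through $R$ and one rerouted around it; the entire burden of the proof is then the construction of the second trail $T'$, which you explicitly defer as ``the hard part.'' Nothing you have established guarantees that such a rerouting exists: the arcs and junctions on the far side of $R$ need not assemble into a smooth curve with the same two endpoints, and the alternative sharp-corner pairings you invoke inside the marked faces of $C$ need not be compatible with any actual trail of the diagram. As written, the proposal is a statement of intent rather than a proof.

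The paper sidesteps this difficulty by aiming at a different forbidden structure: a smooth \emph{closed} curve, ruled out by Lemma~\ref{lem:noloops}. It first argues that the interior of $C$ cannot be a single surrounded junction (marked faces have only sharp angles, but every junction has two non-sharp angles), so the interior consists of one or more faces; when these form several regions it passes to a leaf region $R$. It then views the diagram inside $R$ as a confluent drawing whose ``vertices'' are the boundary junctions spanning an angle of $2\pi$, and applies Lemma~\ref{lem:consecutive} to obtain two \emph{consecutive} boundary junctions joined by a smooth path inside $R$; since consecutive boundary junctions lie on a common marked face of $C$, closing the path through that face yields a smooth loop and the contradiction. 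Lemma~\ref{lem:consecutive} is exactly the existence statement your sketch is missing, and the single-junction-interior case and the reduction to a leaf region are further steps your argument would also need to address. I would recommend either adopting the loop-based argument or finding a concrete replacement for the existence of $T'$ before considering this proof complete.
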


\begin{proof}
Suppose for a contradiction that this dual graph contains an induced cycle $C$. It is not possible for marked faces to entirely surround a single junction of the diagram, because they all have sharp angles at the junction and every junction has two non-sharp angles. Therefore, the part of the diagram inside $C$ must consist of one or more faces bounded by arcs of the faces in $C$. These surrounded faces may form a single connected region $R$, or they may form multiple connected regions separated by junctions that appear more than once along the boundary of $C$; in the latter case, these regions and the junctions that connect them form a tree, and we may choose $R$ to be a leaf of the tree. Thus, in either of these two cases there exists a region $R$ consisting of a set of neighboring faces of the diagram, in which all but at most one of the junctions on the boundary of $R$ span an angle of $2\pi$ (the one exceptional junction being the one that connects $R$ to other connected regions within $C$).

The part of the diagram within $C$ can itself be viewed as a confluent drawing, within which each of the junctions that spans an angle of $2\pi$ is connected to at least one other junction (otherwise the arcs into that junction could not be part of a complete trail). By Lemma~\ref{lem:consecutive}, there exist two consecutive junctions on the boundary of $C$ that are connected by a smooth path within $R$. This path, together with a smooth path connecting the same two junctions within the marked face of $C$ that contains them both, forms a continuous smooth loop in the original diagram, contradicting Lemma~\ref {lem:noloops}. This contradiction shows that $C$ cannot exist.
\end{proof}

\eenplaatje[width=\textwidth] {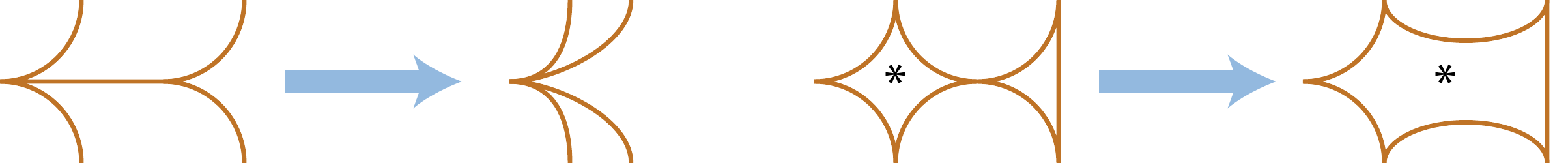} {The simplification operations that transform an outerplanar strict confluent drawing into its canonical diagram: contraction of arcs without sharp angles at both ends (left) and merger of marked faces and triangles (right).}

\begin{lemma}
\label{lem:canonize}
A graph $G$ may be represented by a canonical diagram if and only if it may be represented by an outerplanar strict confluent drawing.
\end{lemma}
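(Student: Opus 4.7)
The plan is to prove both implications separately.

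For the direction from canonical diagrams to outerplanar strict confluent drawings, I would expand each marked face into a planar gadget of junctions and arcs that realises the multi-way crossing the face encodes. A marked face with $k$ sharp corners must admit a unique trail between every pair of corners, so the gadget must provide a unique smooth path between each such pair. For $k = 4$ this is a single binary junction at which two arcs cross tangentially; for larger $k$ I would use a recursive construction guided by the cyclic order of corners, essentially a tree of binary crossing junctions inside the face. Since the corners have a fixed cyclic order on the boundary of the face, planarity and strictness of the gadget can be shown by induction on $k$. The gadget preserves the set of corner-to-corner trails, so the trails of the enlarged diagram are in bijection with the trails of the canonical diagram, and both represent the same graph $G$.

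For the reverse direction, starting from an outerplanar strict confluent drawing $D$, I would apply two local simplification operations until neither applies. The first contracts any arc that lacks a sharp angle at one of its endpoints: the smooth continuation at the unsharp end means the arc can be absorbed into the adjacent junction or vertex without changing which pairs of vertices are joined by smooth paths. The second operation merges two adjacent faces that share a junction at which both non-sharp angles of the junction bound these faces, provided each of the two faces is a triangle or already marked; the canonical-diagram rules forbid exactly this configuration, and the merged face can be declared marked because every trail that passed through the shared junction between the two faces becomes a legitimate corner-to-corner crossing of the new marked face. Both operations strictly decrease the total number of arcs plus faces, so the process terminates.

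It remains to verify three items: (a) each operation preserves the underlying graph and strictness, so the final diagram represents the same graph $G$; (b) the final diagram satisfies every clause of the canonical-diagram definition, which follows because no operation applies at termination and strictness has been preserved throughout; and (c) every marked face produced by the merger step ends up with at least four sharp corners, and no two adjacent marked or triangular faces meet across a binary junction. Lemma~\ref{lem:consecutive} and Lemma~\ref{lem:marked-tree} give the structural control needed to argue that each merger can be carried out locally without disturbing strictness elsewhere in the drawing.

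The main obstacle I expect is the expansion direction: given a marked face with $k \geq 4$ sharp corners in a fixed cyclic order, one must exhibit a concrete planar strict confluent gadget inside the face that produces exactly the prescribed set of corner-to-corner trails and nothing more, while fitting inside the face without disturbing neighbouring structure along the boundary. Showing that a recursive construction is simultaneously planar, strict (a unique smooth path between any two corners), and compatible with the adjacent arcs across the face boundary is where the bulk of the technical work lies.
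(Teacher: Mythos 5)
Your overall strategy coincides with the paper's: the drawing-to-canonical direction via the two local simplifications (arc contraction and merger of two marked-or-triangular faces across a four-valent junction), with a decreasing measure for termination, is exactly the paper's argument, and your ``gadget expansion'' of marked faces is in substance the paper's reverse operation of repeatedly pinching two nonadjacent sides of a marked face together into a new four-valent junction, the resulting subfaces being marked again unless they are triangles. Two points need repair, though. First, you require the gadget to ``provide a unique smooth path between each pair of corners'' of a marked face. Taken literally this is wrong and would destroy strictness: two consecutive sharp corners of a marked face are already joined by the smooth side of the face between them, so an interior connection between them would create a second trail for the same pair. The gadget must realize only the crossing pairs; for a four-corner face this is exactly the two diagonals, which is what a single tangential four-valent junction produces, and the repeated-pinching construction extends this to larger faces without any separate induction.

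Second, the obstacle you flag in the expansion direction is real but is resolved in the paper by an ordering argument rather than by analyzing the gadget: a pinch cannot be performed along a side that also bounds another marked face, since the new junction would introduce a non-sharp angle on that neighboring marked face, which must have all angles sharp. Lemma~\ref{lem:marked-tree} --- the dual graph of the marked faces is a forest --- guarantees that one can always choose to split a marked face that is a leaf or an isolated vertex of this forest, so the process never gets stuck. You invoke Lemma~\ref{lem:marked-tree} only for the merger (simplification) direction, where it is not needed; it is precisely the expansion direction that requires it. With these two corrections your argument matches the paper's proof.
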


\begin{proof}
To form a canonical diagram from a confluent drawing of $G$, repeatedly perform the following two simplifications (Figure~\ref{fig:canonize}):
\begin{itemize}
\item Contract any arc that does not have sharp angles or vertices at both of its ends.
\item Let $j$ be any junction with exactly two arcs in each direction, let $f$ and $f'$ be the two faces with sharp angles at $j$, and suppose that $f$ and $f'$ are both either marked or a triangle. Then merge the two faces $f$ and $f'$ by connecting them through $j$, removing the junction from the drawing, and mark the resulting merged face.
\end{itemize}
These operations preserve the properties that each arc is part of a trail, that each edge of $G$ is represented by a combinatorially unique trail, that each non-adjacent pair in $G$ has no trail, and that each marked face have the correct shape. Each step reduces the number of arcs, so this simplification process eventually terminates with a canonical diagram.

Conversely, any canonical diagram can be converted to an equivalent confluent drawing by repeatedly reversing the second of the two simplification operations. The reverse of this operation splits a marked face $f$ by pinching together two nonadjacent sides to form a new junction with four incident arcs; the two faces formed from $f$ are marked if they have more than three angles, or left unmarked if they are triangles. It is not possible to perform this pinching off step if one or both of the sides of the step is the boundary of another marked face. However, by Lemma~\ref{lem:marked-tree} it is always possible to choose a marked face to simplify that forms an isolated vertex or leaf of the dual graph of the set of remaining marked faces; such a face can always safely be simplified.
\end{proof}

In a canonical diagram $D$, define a \emph{pseudotriangle} for a set $K$ of vertices to be a face of $D$ that has exactly three sharp junctions, each of which is part of a trail from the face to a vertex in $K$, and for which all other boundary junctions (if they exist) do not lead to vertices in $K$. A \emph{side} of a pseudotriangle is the path of arcs connecting two of its sharp angles.

\begin{lemma}
\label{lem:candy-clique}
Let $G$ be represented by a canonical diagram $D$ and let $K$ be a clique of two or more vertices in $G$.
Then the arcs and faces traversed by trails connecting vertices of $K$ form a tree of arcs, marked faces, and pseudotriangles for $K$, connected to each other at junctions. In this tree, each junction is incident to exactly two arcs, marked faces, or pseudotriangles, one in each direction.
\end{lemma}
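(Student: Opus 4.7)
The plan is to construct the claimed tree $\mathcal{T}$ directly from the union of trails between vertices of $K$ and then verify its structure by a local analysis at each junction and at each non-marked face that appears.

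First, let $\Gamma$ be the union of all trails $\tau_{uv}$ for $u,v \in K$, together with the closed marked faces crossed by these trails. Since $K$ is a clique, every pair in $K$ is joined by a trail, so $\Gamma$ is connected; its $1$-dimensional backbone consists of arcs of $D$ glued together at junctions and at sharp corners of marked faces. Form $\mathcal{T}$ by declaring its elements to be the arcs in $\Gamma$, the marked faces crossed by trails in $\Gamma$, and the non-marked faces of $D$ on the boundary of $\Gamma$ that have at least three sharp corners at which both bounding arcs lie in $\Gamma$. Incidences in $\mathcal{T}$ are through the shared junctions of $D$.

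The core step is a local analysis at each junction $j$ incident to $\Gamma$. Because all arcs at $j$ share a common tangent, every trail through $j$ uses exactly one arc on each side of that tangent; write $A^{+}_j$ and $A^{-}_j$ for the arcs of $\Gamma$ at $j$ on the two sides. I would show that $|A^{\pm}_j| \le 2$, and that when a side contains two arcs of $\Gamma$ they are exactly the two arcs bounding the sharp corner of a single face $f$ at $j$ which itself lies in $\mathcal{T}$. The upper bound uses uniqueness of trails: three or more used arcs on one side would force two distinct sharp corners of $\mathcal{T}$-faces to coexist on that single side, and the canonical-diagram prohibition against two adjacent marked or triangular faces at a two-per-side junction, combined with strictness, rules this out. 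This directly gives the ``exactly two, one per direction'' claim.

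Next I would identify the non-marked faces that appear in $\mathcal{T}$ with the pseudotriangles for $K$. At each active sharp corner $j$ of such a face $f$ (one where both of $f$'s bounding arcs lie in $\Gamma$), smoothness at $j$ forces a unique arc of $\Gamma$ on the opposite side of the tangent, and uniqueness of trails shows this arc extends through further tree elements to a single vertex of $K$. That $f$ has exactly three active sharp corners then follows from two sub-arguments: two active corners alone would allow the single trail along $f$'s boundary to be routed without $f$ playing any branching role, so $f$ would not meet the inclusion criterion for $\mathcal{T}$; and four or more active corners would force the required trail between a pair of non-adjacent active corners to enter $f$'s boundary at one corner and then be diverted, by smoothness at every intermediate sharp corner, to the $K$-vertex associated with that intermediate corner rather than the intended target, contradicting the clique property.

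Finally, $\mathcal{T}$ is connected because $K$ is a clique, and it is acyclic because any topological cycle in $\mathcal{T}$ would yield two combinatorially distinct trails between some pair of clique vertices by going around the cycle in the two directions, contradicting the uniqueness axiom of canonical diagrams. The main obstacle I expect is the exactly-three-active-corners analysis for non-marked faces: it requires a careful interplay between strictness, smoothness at each corner, and the fact that non-marked faces cannot be crossed by trails, and it is the place where the definition of a pseudotriangle and the structural rules of canonical diagrams are genuinely used.
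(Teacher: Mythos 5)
Your overall route is genuinely different from the paper's: you build the union $\Gamma$ of all clique trails at once and try to verify the tree structure by a local analysis at every junction and face, whereas the paper inducts on $|K|$, removing one vertex $v$, invoking the tree for $K-v$, and then arguing that the bundle of trails from $v$ (which must coincide outside that tree, by uniqueness of trails) can attach to it only at a single sharp angle of a marked face or pseudotriangle. The inductive framing is what makes the argument tractable: one only ever has to analyze how a single new vertex's trails meet an already-structured object, rather than classifying every junction of $\Gamma$ from scratch. As written, your plan has genuine gaps at exactly the points where the work lies. The central one is the junction bound $|A^{\pm}_j|\le 2$: the canonical-diagram axiom you invoke forbids only a junction with \emph{exactly two arcs in each direction} whose two sharp-angle faces are both marked or both triangles; it says nothing about three used arcs on one side of a junction, nor about two used arcs on one side that are not consecutive (and hence do not bound a common sharp corner at $j$ -- a configuration your case analysis silently assumes away). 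Ruling these out actually requires tracking where the trail between the two $K$-vertices reached along those arcs can go: it cannot pass through $j$ (both endpoints lie in the same funnel interval, cf.\ Lemma~\ref{lem:trail-thru-junction}), it cannot cross an unmarked face, and planarity then forces the marked-face or pseudotriangle structure. None of that is supplied.

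The same incompleteness affects your two remaining steps. The ``exactly three active corners'' claim for unmarked faces -- which you rightly flag as the main obstacle -- is left as a sketch, and the two-corner exclusion as stated is circular (it concludes that $f$ fails your inclusion criterion, but does not resolve the junction that then has two $\Gamma$-arcs on one side with no tree element accounting for them). Finally, acyclicity cannot be dispatched in one line by ``going around the cycle both ways'': the second way around is a sequence of arcs and faces, not automatically a smooth trail between two \emph{vertices}, so the uniqueness axiom does not apply to it until you have already established the local structure at junctions that lets you splice smooth detours (compare the care taken in the proof of Lemma~\ref{lem:marked-tree}, which needs Lemmas~\ref{lem:consecutive} and~\ref{lem:noloops} for an analogous acyclicity statement). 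In short, your decomposition identifies the correct sub-claims, but each of the three load-bearing steps is asserted rather than proved, and the one concrete justification offered misapplies the diagram axioms; I would recommend either completing the local analysis along the lines above or switching to the paper's induction on $|K|$.
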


\begin{proof}
We use induction on the size of $K$; as a base case, the result is clearly true for $|K|=2$, for which the single trail forms a path of arcs and marked faces. If $|K|>2$ and $v$ is an arbitrary vertex of $K$, then the trails through $K-v$ form a tree of arcs, marked faces, and pseudotriangles by the induction hypothesis.
In order to reach every vertex of $K-v$, and avoid making multiple connections to any vertex, the trails from $v$ to $K-v$ can only connect to this tree in one of three ways:
\begin{enumerate}
\item It may be incident to one of the sharp angles of a marked face of $D$ that already belongs to  the tree.
\item It may be incident to a sharp angle of a marked face of $D$ that is not already in the tree, but that has an arc in the tree, causing the face to be added to the tree.
\item It may be incident to a sharp angle of a pseudotriangle of $D$ that is not already in the tree, but whose opposite side is already in the tree, again causing the face to be added to the tree.
\end{enumerate}
In any of these cases, outside of the tree for $K-v$, all the trails to $v$ must follow the same sequence of arcs and marked faces, for to do otherwise would violate the uniqueness of trails in canonical diagrams. The tree for $K-v$, together with the arcs and marked faces traversed by the rest of the trail from this tree to $v$, forms another structure of the same type, and contains trails from all vertices of $K-v$ to $v$.
\end{proof}

\begin{lemma}
\label{lem:funnel-nonadj}
For every junction $j$ of a canonical diagram with funnel intervals $[a,b]$ and $[c,d]$, at least one of the intervals must be separated.
\end{lemma}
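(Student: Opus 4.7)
The plan is a proof by contradiction. Suppose that neither $[a,b]$ nor $[c,d]$ is separated; then $ab$ and $cd$ are both edges of $G$ and, combined with the four cross edges $ac, ad, bc, bd$ realized by smooth passes through $j$, the set $K = \{a,b,c,d\}$ forms a clique. Note that the trail $T_{ab}$ cannot smooth-pass through $j$, since both of its endpoints lie on the same tangent side of $j$. The strategy is to show that both sharp faces at $j$ must be marked or triangles, violating constraint~(5) in the definition of a canonical diagram.

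First, I would argue that $j$ has exactly two arcs on the $(a,b)$-side. If $j$ had $\geq 3$ arcs there, an intermediate fan arc would split the $(a,b)$-region into an $a$-part and a $b$-part; the trail $T_{ab}$ would have to smooth-pass through some junction $j^{**}$ inside the middle subtree. Uniqueness of trails forbids $j^{**}$'s fans from routing through $j$ (any such route would give a second realization of one of $T_{ac}, T_{ad}, T_{bc}, T_{bd}$), so those fans stay inside the $(a,b)$-region and give $j^{**}$ funnel intervals $[a,e]$ and $[f,b]$ with $e,f \in [a,b]$, contradicting the non-separation of $[a,b]$. Symmetrically, $j$ has exactly two arcs on the $(c,d)$-side.

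Next I apply Lemma~\ref{lem:candy-clique}: the $K$-trails form a tree in which $j$ is incident to a single node $N_{ab}$ on the $(a,b)$-side, and this node is bounded at $j$ by the only two arcs $\alpha^a$ and $\alpha^b$ on that side. If $N_{ab}$ were an arc to another junction $j^{**}$, the same routing analysis would force $j^{**}$ to have funnel $(a,b,c,d)$, contradicting the uniqueness of funnels in the canonical diagram (from Appendix~\ref{app:canonical}). Hence $N_{ab}$ is a face, namely the sharp face $f_{ab}$, which is marked or a pseudotriangle. In the pseudotriangle subcase, $f_{ab}$ is in fact a triangle: every side of $f_{ab}$ is traversed by some $K$-trail---the two sides meeting at $j$ by $T_{ac}, T_{ad}, T_{bc}, T_{bd}$, and the third side by $T_{ab}$---so any non-sharp junction on the boundary of $f_{ab}$ would lie on a $K$-trail and violate the pseudotriangle's requirement that non-sharp boundary junctions not lead to $K$. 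Hence the boundary has only the three sharp corners, and $f_{ab}$ is a genuine triangle.

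Applying the symmetric argument on the $(c,d)$-side, $f_{cd}$ is also marked or a triangle. But constraint~(5) forbids both sharp faces at a two-plus-two junction $j$ from lying in $\{\text{marked}, \text{triangle}\}$, which is the contradiction that completes the proof. The main obstacle is the routing-through-$j$ step: the non-separation hypothesis alone does not exclude such detours, and blocking them requires invoking the uniqueness of trails (to prevent duplicate trails for cross-pairs) together with the uniqueness of canonical diagrams (to prevent duplicate funnels).
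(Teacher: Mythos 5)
Your proof is correct in outline and its skeleton matches the paper's exactly: assume neither interval is separated, deduce that $\{a,b,c,d\}$ is a clique, apply Lemma~\ref{lem:candy-clique} to obtain a marked face or pseudotriangle on each side of $j$ whose boundary arcs are the only arcs into $j$, upgrade ``pseudotriangle'' to ``triangle,'' and contradict the condition on junctions with two arcs in each direction whose two sharp-angle faces are both marked or triangles. You diverge in the two supporting sub-steps, with mixed results. For the triangle upgrade, the paper shows that a junction $j'$ on the far side of the pseudotriangle would itself have funnel intervals $[a,e]$ and $[f,b]$ and hence separate $[a,b]$, whereas you invoke the pseudotriangle definition directly (non-sharp boundary junctions must not lead to vertices of $K$, yet such a junction would lie on $T_{ab}$); your version is shorter and works given Lemma~\ref{lem:candy-clique}. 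For the claim that $j$ has only two arcs per side, however, your argument via a junction $j^{**}$ of a ``middle subtree'' crossed by $T_{ab}$ has a gap: showing that the fans of $j^{**}$ stay inside the $(a,b)$-region only bounds its funnel vertices within $[a,b]$, whereas the definition of ``separated'' requires funnel intervals anchored \emph{exactly} at $a$ and at $b$, and nothing you say forces the extreme paths from $j^{**}$ to reach $a$ and $b$ rather than interior vertices. (Your parallel appeal to ``uniqueness of funnels'' within a single diagram is also not a result the paper establishes; Lemma~\ref{lem:candy-same-junctions} compares two \emph{different} canonical diagrams.) The paper discharges this step more directly from funnel extremality: an additional arc at $j$ outside the two face-boundary arcs would either displace one of $a,b,c,d$ from the funnel or create a duplicate trail. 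Substituting that argument repairs the sub-step without changing your overall structure.
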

\begin{proof}
For the sake of contradiction, assume that $[a,b]$ and $[c,d]$ are not separated. Then the vertices $a$, $b$, $c$, and $d$ form a clique. By Lemma~\ref{lem:candy-clique} the drawing
would have two pseudotriangles or marked faces, one on each side of $j$.
The boundaries of these faces adjacent to $j$ can be the only arcs into $j$, for otherwise one of $a$, $b$, $c$, or $d$ would not be in the funnel (see Fig.~\ref{fig:FunnelSep}). Now assume there is a junction $j'$ between $a$ and $b$ (or $c$ and $d$) on the boundary of the face opposite to $j$ (in this case the face must be a pseudotriangle). Consider the path that leaves $j'$ in the direction of $a$ and continues as far clockwise as possible. This path cannot pass through the pseudotriangle, because that would result in multiple paths between $b$ and either $c$ or $a$, which is not allowed. Hence this path must reach $a$, for otherwise $a$ would not be a funnel vertex of $j$. Similarly, the path that leaves $j'$ in the direction of $b$ and continues as counterclockwise as possible must reach $b$. The other paths of the funnel must end at a vertex in the circular interval $[a,b]$. This would imply that $[a,b]$ is separated, which contradicts our assumption. Thus, the faces on either side of $j$ must be marked faces or triangles. But this configuration violates the condition that there be no junction $j$ with four arcs in which the two faces having sharp angles at $j$ are marked or triangles.
\end{proof}

\eenplaatje{FunnelSep} {The configuration of Lemma~\ref{lem:funnel-nonadj}.}

\begin{lemma}
\label{lem:candy-same-junctions}
Let $D$ and $D'$ be canonical diagrams for the same graph $G$ with the same vertex ordering.
Then for every junction $j$ of $D$, there must be a junction $j'$ in $D'$ with the same funnel.
\end{lemma}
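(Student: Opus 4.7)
The plan is a proof by strong induction on the combined size $|[a,b]|+|[c,d]|$ of the two funnel intervals of $j$. The inductive hypothesis states that every junction of $D$ whose funnel intervals have strictly smaller combined size has a matching junction in $D'$.

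By Lemma~\ref{lem:funnel-nonadj}, assume without loss of generality that $[a,b]$ is separated in $D$. The first step is to transfer this separation to $D'$. If $a\not\sim b$ in $G$, then $[a,b]$ is separated in any canonical diagram of $G$ by definition. Otherwise $a\sim b$, so the separation of $[a,b]$ in $D$ must be witnessed by some junction $j_0$ of $D$ with funnel $(a,e,f,b)$, where $e$ and $f$ lie strictly in the interior of $[a,b]$; its two funnel intervals $[a,e]$ and $[f,b]$ are both strictly smaller than $[a,b]$, so by the inductive hypothesis $D'$ contains a junction with the same funnel, witnessing that $[a,b]$ is separated in $D'$ as well.

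The delicate second step is to show that $c$ is determined graph-theoretically as the next counterclockwise neighbor of $a$ past $b$, with three symmetric statements pinning down $d$ and the roles of $a,b$ viewed from $c,d$. Suppose for contradiction that $a$ has a neighbor $v$ strictly between $b$ and $c$ on the boundary. Passing if necessary to the outerplanar strict confluent drawing obtained by reversing the canonicalization of Lemma~\ref{lem:canonize}, we may work in a genuinely planar drawing. In it, the arc $e^*$ at $a$ that enters the subtree rooted at $j$ carries trails from $a$ to every vertex of $[a,b]\cup[c,d]$, and in particular to both $b$ (which lies clockwise of $v$) and $d$ (which lies counterclockwise of $v$). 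The trail $a$--$v$, by contrast, must exit $a$ through a different arc $e^{**}$, because $v$ is not a funnel vertex of $j$ and so cannot be reached from $a$ through the subtree at $j$ by a smooth path. Since these trails share only the vertex $a$, planar non-crossing forces the cyclic order of $e^*$ and $e^{**}$ at $a$ to match the boundary order of their destinations; but $b$ and $d$ sit on opposite sides of $v$, forcing contradictory orderings of $e^*$ relative to $e^{**}$.

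The final step is to exhibit the matching junction $j' \in D'$. Since the chords $ac$ and $bd$ interleave on the boundary, the trails $a$--$c$ and $b$--$d$ in $D'$ must meet, either at a shared junction or inside a common marked face. In the shared-junction case, let $j'$ be that junction: both trails pass through $j'$, so its funnel places $\{a,b\}$ on one side and $\{c,d\}$ on the other, and any funnel vertex strictly beyond these four would contradict one of the four neighbor identities of the previous step (for instance, a funnel vertex $a^*$ strictly clockwise of $a$ would be a neighbor of $c$ lying strictly between $d$ and $a$, violating the identity that $a$ is the next counterclockwise neighbor of $c$ past $d$), so the funnel of $j'$ equals $(a,b,c,d)$. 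The main obstacle I anticipate is the marked-face subcase: the plan here is to show that such a marked face, forced to carry the interleaving trails that in $D$ are routed through $j$, would violate one of the structural rules governing canonical diagrams (such as the prohibition against a junction with two arcs in each direction flanked by two marked or triangular faces), ruling this case out and leaving only the shared-junction case.
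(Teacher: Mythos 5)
Your overall strategy---induction on funnel-interval size, plus an analysis of how the trails $a$--$c$ and $b$--$d$ must cross in $D'$---matches the paper's, but two of your three steps have genuine gaps. In Step~2, the key claim that the arc $e^*$ leaving $a$ toward $j$ ``carries trails from $a$ to every vertex of $[a,b]\cup[c,d]$, and in particular to both $b$ and $d$'' is false: a smooth path from $a$ through $j$ cannot make a U-turn at $j$, so through $j$ the vertex $a$ reaches only vertices of the opposite interval $[c,d]$; it does not reach $b$, and indeed $a$ and $b$ need not even be adjacent in $G$. The planarity contradiction you derive from the cyclic order of $e^*$ and $e^{**}$ therefore does not follow, and the four ``neighbor identities'' on which your final step relies are never established. (The paper never needs such a graph-theoretic characterization of the funnel inside this proof; it is closer to what the recognition algorithm assumes, and its correctness is downstream of, not an ingredient of, this uniqueness lemma.)

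In the final step you also undercount the ways the two trails can meet. If they share a junction $j'$, the two sides of $j'$ can pair the endpoints either as $\{a,b\}\mid\{c,d\}$ or as $\{a,d\}\mid\{b,c\}$; you silently assume the first pairing, but the second must be excluded, and this is where the induction hypothesis does its real work in the paper. In both the bad pairing and the marked-face case one gets $a\sim b$; since $[a,b]$ is separated there is a junction $\iota$ of $D$ with funnel intervals $[a,e]$ and $[f,b]$ inside $[a,b]$, the induction hypothesis places a matching $\iota'$ on the $a$--$b$ trail of $D'$, and every vertex reachable from $\iota'$ lies in $[a,b]$ --- which is incompatible with $\iota'$ sitting on a trail that also leads smoothly to $c$ or $d$ (bad pairing), and which forces $\iota'$ onto the boundary of the marked face $F$, giving $F$ a non-sharp angle (marked-face case). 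Your marked-face subcase is left as a ``plan,'' and the rule you guess would be violated (the two-marked-or-triangular-faces condition) is the one used in Lemma~\ref{lem:funnel-nonadj}, not the one that closes this case. Finally, even in the good pairing, showing that the funnel of $j'$ is exactly $(a,b,c,d)$ (no more extreme reachable vertices, nothing entering or leaving between the merge and split points) needs the argument the paper gives via the no-sharp-angles-at-both-ends condition on arcs, not the unproved neighbor identities.
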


\begin{proof}
For a junction $j$ of $D$ with funnel intervals $[a,b]$ and $[c,d]$, let $[a,b]$ be the separated funnel interval. We prove the lemma by induction on the cardinality of $[a,b]$. The existence of the junction implies that $(a, c)$ and $(b, d)$ are edges of $G$. Therefore, the trail in $D'$ between $a$ and $c$ must cross the trail between $b$ and $d$ in the canonical diagram. This can happen in three ways (see Fig.~\ref{fig:JunctionCases}): (1) the trails from $a$ and $b$ merge and then split towards $c$ and $d$, (2) the trails from $b$ and $c$ merge and then split towards $a$ and $d$, or (3) the trails cross inside a marked face $F$ from which each vertex $a,b,c,$ and $d$ is reached by a different junction of $F$.

In case (1) the trails can travel together along a sequence $p$. But because $[a,b]$ and $[c,d]$ are the funnel intervals for $j$, there can be nothing entering or leaving $p$ between the merge and split points, and $p$ cannot contain a nonzero number of arcs without violating the condition on arcs without sharp angles at both ends. Therefore, the merge point of the trails from $a$ and $b$ is also the split point of the trails to $c$ and $d$, and forms a single junction $j'$ in $D'$ with the same funnel.

In the latter two cases, $a$ and $b$ are connected by a trail in $D'$, therefore they are adjacent in $G$.
Hence there must exist a junction $\iota$ in $D$ with funnel intervals $[a,e]$ and $[f,b]$, where $e, f \in [a,b]$. If $[a,b]$ contains less than two vertices besides $a$ and $b$, then this is not possible (base case). Otherwise, by induction, there must exist a corresponding junction $\iota'$ in $D'$ with funnel intervals $[a,e]$ and $[f,b]$.
This junction must lie along the trail between $a$ and $b$. By the properties of a funnel, neither $e$ nor $f$ can be connected to $c$ or $d$. This directly makes case (2) impossible, since such a connection is necessary in that configuration. In case (3) this implies that $\iota'$ must lie on the part of the trail between $a$ and $b$ that is along (or through) $F$. This is not possible if the trail passes through $F$, so $\iota'$ must be at the boundary of $F$. But this makes the marked face $F$ invalid, as it may have only sharp angles.
\end{proof}

\eenplaatje{JunctionCases} {The cases of Lemma~\ref{lem:candy-same-junctions}.}

\begin{lemma}
\label{lem:trail-thru-junction}
Let $j$ be a junction of canonical diagram $D$ with funnel $a,b,c,d$, and let $uv$ be an
edge of the graph $G$ represented by $D$. Then the trail for edge $uv$ passes through $j$ if and only if
exactly one of $u$ and $v$ is in the closed interval $[a,b]$ and exactly one
of $u$ and $v$ is in the closed interval $[c,d]$.
\end{lemma}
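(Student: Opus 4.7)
The plan is to prove both implications from the definition of the funnel together with the smoothness of trails at junctions and the uniqueness requirement on trails in a canonical diagram.

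For the forward direction, I would suppose the trail $\tau$ representing edge $uv$ passes through $j$. Since all arcs meeting at $j$ share a common tangent line, smoothness of $\tau$ at $j$ forces it to enter $j$ along one of the two smooth tangent directions and to leave along the other. Following $\tau$ from $j$ out through direction~1 is a smooth curve terminating at one of the edge's endpoints (say $u$) and visiting no other vertex along the way, so by the very definition of $a$ and $b$ as the clockwise- and counterclockwise-extreme vertices so reachable, $u \in [a,b]$; symmetrically $v \in [c,d]$.

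For the converse, suppose $u \in [a,b]$ and $v \in [c,d]$, and let $R$ be the topological disk bounded by the smooth paths in $D$ from $j$ to $a$ and from $j$ to $b$ together with the outer-boundary arc from $a$ counterclockwise to $b$. By planarity of the diagram, $u$ lies on the boundary of $R$ while $v$ lies in its complement, so the trail $\tau$ for $uv$ must leave $R$ somewhere. I would argue that the only possible exit is through $j$: the trail cannot visit another vertex of the diagram, so it cannot leave through $a$ or $b$, and any attempt to exit through a junction $j'$ lying on one of the boundary trails $j \to a$ or $j \to b$, by switching to a smooth continuation at $j'$ crossing into the complement of $R$, would either produce a second trail between some pair of vertices (violating the uniqueness of trails) or yield a smooth closed curve (violating Lemma~\ref{lem:noloops}). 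Hence $\tau$ exits $R$ precisely at $j$ and therefore passes through $j$.

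The hard part, as the sketch above indicates, is the case analysis in the reverse direction that rules out the trail exiting $R$ through an intermediate boundary junction $j'$. Making this rigorous requires unpacking the funnel structure at such a $j'$ and carefully invoking trail uniqueness together with the no-loops property in each case; the forward direction, by contrast, is essentially a direct unwinding of the funnel definition.
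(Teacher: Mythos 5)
Your argument follows the paper's proof in essentially the same structure: the forward direction unwinds the definition of the funnel extremes $a,b,c,d$ (a trail through $j$ reaching a vertex outside the intervals would contradict their extremality), and the reverse direction treats the funnel paths as a topological separator and rules out crossings away from $j$ via trail uniqueness and the no-loops property. The case analysis you flag as the remaining hard part is exactly the step the paper itself compresses into one sentence (``the path must either go through $j$ or it must cross the funnel twice, again causing a violation of the requirement that trails be unique''), so your sketch matches the published proof both in approach and in level of detail.
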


\begin{proof}
If the trail
passed through $j$ but $u$ or $v$ was outside these intervals, it would be
reached from $j$ by a more extreme path than one of $a$, $b$, $c$, or $d$,
violating the definition of a funnel. If both are inside the same interval, but the trail between them goes through $j$, then the diagram violates the requirement that trails be unique.
And if $u$ and $v$ are both inside
different intervals, then the path must either go through $j$ or it must
cross the funnel twice, again causing a violation of the requirement that trails be unique.
\end{proof}

\begin{lemma}
\label{lem:trail-across-mark}
Let $D$ be a canonical diagram in which two junctions $j$ and $j'$ are consecutive on the trail from $u$ to $v$. Then this trail passes across a marked face from $j$ to $j'$ if and only if there exist vertices $p$ and $q$, in the cyclic order $u$, $p$, $v$, $q$, such that $u$, $v$, $p$, and $q$ form a clique, and such that $p$ and $q$ are both in the same interval as $v$ in the funnel for $j$ and in the same interval as $u$ in the funnel for $j'$.
\end{lemma}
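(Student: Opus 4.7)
My plan is to prove the biconditional separately in each direction, drawing on Lemma~\ref{lem:trail-thru-junction} and the uniqueness of trails in a canonical diagram.

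For the forward direction, suppose the $uv$ trail crosses a marked face $F$ between its sharp corners $j$ and $j'$. Since $F$ has at least four sharp corners and $j, j'$ cannot be cyclically adjacent among them (else the short boundary arc of $F$ would supply a second trail from $u$ to $v$, violating uniqueness), the $j$-$j'$ chord across $F$ splits the remaining sharp corners into two nonempty groups. I would pick corners $s_1, s_2$ one from each group, and follow arcs outward from each to reach vertices $p, q$ respectively; strictness forces the outward vertex-reach from $s_1$ to be disjoint from that of $s_2$ (else one obtains a self-loop by concatenating outward paths with a trail across $F$). The six pairs in $\{u, v, p, q\}$ are realized by trails crossing $F$ between appropriate sharp-corner pairs in $\{j, s_1, j', s_2\}$, making this set a clique. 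The planar separation of $s_1$ and $s_2$ by the chord translates to $p$ and $q$ lying on opposite sides of the $uv$ trail in the disk, giving cyclic order $u, p, v, q$. Finally, since $p$ is reachable from $j$ by crossing $F$ to $s_1$ and then outward, it lies in the $v$-side funnel interval of $j$; reached analogously from $j'$, it lies in the $u$-side funnel interval there, and the symmetric argument handles $q$.

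For the reverse direction, suppose $p, q$ exist with the stated properties. By Lemma~\ref{lem:trail-thru-junction} the $pq$ trail passes through neither $j$ nor $j'$, since $p, q$ share a funnel interval at each. The same lemma rules out the $pq$ trail passing through any other junction $j^\ast$ on the $uv$ trail: such a $j^\ast$ lies before $j$ or after $j'$, and in either case all three of $v, p, q$ (or symmetrically $u, p, q$) lie in a single interval of $j^\ast$'s funnel. Because the cyclic order on the boundary is $u, p, v, q$, the vertices $p$ and $q$ lie in opposite components of the disk minus the $uv$ trail, so the $pq$ trail must meet the $uv$ trail somewhere. Arcs in a confluent diagram intersect only at their endpoints, and two trails sharing a junction are mutually tangent there rather than transversal, so a genuine crossing of two trails must occur inside a marked face $F^\ast$ traversed by both. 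To locate $F^\ast$ between $j$ and $j'$ I would argue by contradiction: if $F^\ast$ preceded $j$ on the $uv$ trail, then travelling from $j$ backward along the $uv$ trail into $F^\ast$ and then crossing $F^\ast$ via a sharp corner used by the $pq$ trail would reach $p$ from $j$ in the $u$-direction, placing $p$ in the $u$-side funnel interval of $j$ and contradicting the hypothesis; symmetrically $F^\ast$ cannot follow $j'$. Since $j$ and $j'$ are consecutive on the $uv$ trail, $F^\ast$ is exactly the marked face crossed between them.

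The hardest part will be the topological step in the reverse direction that pins the $pq$-$uv$ trail intersection inside a marked face. This requires excluding shared-arc coincidences (which would force shared endpoint junctions already ruled out by the funnel argument), tangential encounters at shared junctions (ruled out because no junction is shared), and arc-arc transversal intersections (forbidden in a confluent drawing). Once the crossing is confined to a marked face interior, the funnel conditions on $p, q$ together with Lemma~\ref{lem:trail-thru-junction} cleanly identify the required face as lying between $j$ and $j'$.
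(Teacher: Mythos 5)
Your forward direction is essentially the paper's: pick vertices reachable from sharp corners on the two sides of the $uv$ chord, observe that the four vertices form a clique via trails diverging inside the face, and get the interval containments from Lemma~\ref{lem:trail-thru-junction}. Your reverse direction, however, takes a genuinely different route. The paper invokes Lemma~\ref{lem:candy-clique}: since $\{u,v,p,q\}$ is a clique, the trails among them form a tree of arcs, marked faces, and pseudotriangles; the funnel hypotheses force every face of that tree to lie between $j$ and $j'$, and the consecutiveness of $j$ and $j'$ leaves a single marked face with four sharp angles as the only possibility. You instead run a Jordan-curve argument (the cyclic order $u,p,v,q$ forces the $pq$ trail to meet the $uv$ trail), classify the ways two trails can intersect, and use funnel bookkeeping to pin the crossing to a marked face between $j$ and $j'$. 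Your route avoids the clique-structure lemma entirely and is more self-contained; the paper's route avoids your delicate intersection case analysis.

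Two points in your reverse direction need repair before it is sound. First, the claim that ``two trails sharing a junction are mutually tangent there rather than transversal'' is false: a junction's funnel is precisely a configuration in which the trail from $a$ to $c$ and the trail from $b$ to $d$ cross transversally at a shared junction (this is exactly how the paper's Lemma~\ref{lem:candy-same-junctions} analyzes trail crossings). Your argument survives only because you separately exclude every shared junction, so the justification must be replaced even though the conclusion stands. Second, that exclusion for a junction $j^\ast$ preceding $j$ (or following $j'$) on the $uv$ trail rests on an unstated nesting claim: that the funnel interval of $j^\ast$ on the $v$ side contains the funnel interval of $j$ on the $v$ side, so that $p$ and $q$ --- which you only know to share $v$'s interval at $j$ and $u$'s interval at $j'$ --- also share an interval at $j^\ast$. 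This nesting is true (paths emanating from a junction cannot cross, and everything reachable from $j$ toward $v$ is reachable from $j^\ast$ toward $v$), but it does real work and must be proved; alternatively, you could treat shared junctions exactly as you treat marked faces before $j$ or after $j'$, showing that if the $pq$ trail passed through $j^\ast$ then one of $p,q$ would be reachable from $j$ in the $u$ direction, contradicting the hypothesis. With these repairs the argument goes through.
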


\begin{proof}
Suppose first that the trail crosses a marked face. This face must have sharp angles on both sides of trail $uv$; let $p$ and $q$ be any two vertices reachable from two angles on opposite sides.  Then there also exist trails from $u$ and $v$ to $p$ and $q$ that diverge from the trail from $u$ to $v$ within the marked face. The containment of $p$ and $q$ within the stated intervals follows from Lemma~\ref{lem:trail-thru-junction}.

In the other direction, suppose that the trail passes consecutively through $j$ and $j'$ and that there exist $p$ and $q$ satisfying the conditions of the lemma. Since $u$, $v$, $p$, and $q$ form a clique, Lemma~\ref{lem:candy-clique} implies that this clique is represented either by a marked face with at least four sharp angles connecting to all four of these vertices, or by two marked faces or triangles connected by a sequence of junctions, arcs, and additional marked faces. However, because of the assumption in the lemma that $p$ and $q$ belonging to certain intervals, all of these faces must lie between $j$ and $j'$; the additional assumption that $j$ and $j'$ are consecutive on the trail means that there is only one possibility, a single marked face with separate sharp angles connecting to the four vertices $u$, $v$, $p$, and $q$. Thus, as the lemma states, the trail crosses a marked face.
\end{proof}

\begin{lemma}
\label{lem:candy-same-arcs}
Let $D$ and $D'$ be canonical diagrams for the same graph $G$ with the same vertex ordering.
Then for every arc from junction $j$ to $j$' of $D$, there must be an arc in $D'$ connecting the corresponding junctions.
\end{lemma}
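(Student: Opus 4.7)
The plan is to reduce the uniqueness of arcs to the already-established uniqueness of junctions (Lemma~\ref{lem:candy-same-junctions}) together with the characterization (Lemma~\ref{lem:trail-across-mark}) of how consecutive junctions on a trail are connected. The idea is to pick a single trail that uses the arc and track what that trail must look like in the other diagram.

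First I would set up the argument by choosing an arbitrary arc $e$ from $j$ to $j'$ in $D$. By the definition of a canonical diagram, every arc is part of at least one trail, so there is an edge $uv$ of $G$ whose trail $T$ in $D$ traverses $e$; on $T$, the junctions $j$ and $j'$ are consecutive and are connected through $e$ (not across a marked face). By Lemma~\ref{lem:candy-same-junctions} there exist junctions in $D'$ with the same funnels as $j$ and $j'$, which I identify with $j$ and $j'$. By Lemma~\ref{lem:trail-thru-junction}, membership of a junction on the trail for $uv$ depends only on the funnel of that junction and the endpoints $u,v$, so the trail $T'$ for $uv$ in $D'$ also visits both $j$ and $j'$.

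The core step is to argue that $j$ and $j'$ are again consecutive on $T'$. Here I would exploit the fact that the order of junctions along a trail from $u$ to $v$ is encoded in the nesting of their $u$-side funnel intervals: if $j_1$ precedes $j_2$ on the trail, then the $u$-side funnel interval of $j_1$ is contained in that of $j_2$, because any boundary vertex reachable from $j_1$ in the $u$-direction is also reachable from $j_2$ in that direction (via $j_1$). Since the funnel data are the same in $D$ and $D'$, the order of the junctions on $T$ matches the order on $T'$. Any junction $j''$ strictly between $j$ and $j'$ on $T'$ would, by Lemma~\ref{lem:candy-same-junctions}, correspond to a junction with the same funnel in $D$, which by Lemma~\ref{lem:trail-thru-junction} would lie on $T$, contradicting that $j,j'$ are consecutive on $T$. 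Hence $j$ and $j'$ are consecutive on $T'$ as well.

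Once consecutiveness is in hand, Lemma~\ref{lem:trail-across-mark} finishes the proof: the criterion that distinguishes an arc connection from a marked-face crossing between two consecutive trail-junctions is phrased entirely in terms of $G$ and the funnels of those two junctions (namely, the existence of a pair $p,q$ with $u,v,p,q$ forming a clique and lying in the prescribed intervals). Because $D$ and $D'$ agree on $G$ and on those funnels, and because the arc case (absence of such $p,q$) holds in $D$, it must also hold in $D'$. Therefore $j$ and $j'$ are connected by an arc in $D'$. The main obstacle I anticipate is the consecutiveness step: the funnel-nesting argument requires some care when two successive junctions share a $u$-side funnel interval (for example, when $j$ has a single arc leading into $j''$ on the $u$-side), so a tie-breaker using the $v$-side interval or a neighboring junction's funnel will likely be needed to pin down the order from the funnel data alone.
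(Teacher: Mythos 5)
Your proposal is correct and follows essentially the same route as the paper's proof: both pick an edge $uv$ whose trail uses the arc, use Lemma~\ref{lem:trail-thru-junction} to show the trails in $D$ and $D'$ pass through the same junctions, argue that the ordering of these junctions along the trail is determined by the funnel data (the paper phrases this as monotonicity in the sizes of the reachable vertex sets in each direction, which also handles the tie-breaking concern you raise), and then invoke Lemma~\ref{lem:trail-across-mark} to rule out a marked-face crossing between the two consecutive junctions.
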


\begin{proof}
Let $uv$ be an edge of $G$ whose trail in $D$ uses the arc.
The trails in $D$ and $D'$ from $u$ to $v$ both go through the same sets of junctions by Lemma~\ref{lem:trail-thru-junction}.
On the trail in $D$ from $u$ to $v$, the junctions are monotonic in their ordering
by the size of the sets of reachable vertices in each direction, and the same is true in $D'$, so the trails go through the same junctions in the same ordering.
Therefore, in $D'$, the trail from $u$ to $v$ also goes from $j$ to $j'$ with no intervening junctions.
The only way it can avoid using an arc that satisfies the lemma is for it to cross a marked face instead of an arc, but this would violate Lemma~\ref{lem:trail-across-mark} for either $D$ or $D'$.
\end{proof}

\begin{theorem}
Every two canonical diagrams for the same graph $G$ with the same vertex ordering are isomorphic.
\end{theorem}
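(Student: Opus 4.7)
The plan is to build an explicit isomorphism $\phi$ between $D$ and $D'$. Since the two diagrams already share the same vertex set in the same cyclic order, I take $\phi$ to be the identity on vertices, and extend it first to junctions, then to arcs, and finally verify that the planar embedding and the set of marked faces agree.

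For junctions, Lemma~\ref{lem:candy-same-junctions} provides, for every junction $j$ of $D$, a junction of $D'$ with the same funnel, and symmetrically from $D'$ to $D$. To promote this into a bijection, I first argue that within any canonical diagram a funnel uniquely identifies its junction. If two distinct junctions $j_1, j_2$ shared a funnel $(a,b,c,d)$, then by Lemma~\ref{lem:trail-thru-junction} every trail between a vertex of $[a,b]$ and a vertex of $[c,d]$ would have to pass through both. In particular the unique trail from $a$ to $c$ would traverse $j_1$ and $j_2$ along some intermediate arc path; but the internal junctions of such a path would lack the sharp angles demanded of junctions in a canonical diagram (the requirement that each arc have a sharp angle or a vertex at each end), a contradiction. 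Hence funnels identify junctions uniquely in each diagram, and $\phi$ is a well-defined bijection on junctions.

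Next, Lemma~\ref{lem:candy-same-arcs} extends $\phi$ to arcs by mapping each arc $jj'$ of $D$ to the arc $\phi(j)\phi(j')$ of $D'$. The rotation at each junction is preserved because the cyclic order of the four (or more) arcs at a junction is dictated by its funnel: in each of the two directions one arc leads to the clockwise funnel vertex and the other to the counterclockwise funnel vertex, so corresponding junctions inherit matching rotations, and $\phi$ is an isomorphism of embedded planar graphs. Finally, for marked faces, Lemma~\ref{lem:trail-across-mark} characterizes precisely when a trail crosses a marked face between two consecutive junctions purely in terms of the underlying graph, the vertex ordering, and the funnels of those junctions. Since $D$ and $D'$ agree on all of these data, the set of marked-face crossings agrees, and since each marked face is determined by its incident arcs together with the crossing trails that pass through it, the marked faces themselves correspond under $\phi$.

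The step I expect to be the most delicate is the uniqueness of a junction given its funnel. Canonical diagrams can in principle carry long chains of junctions joined by arcs, so trail uniqueness alone is not enough; the argument must be combined with the explicit canonical-diagram requirement that each arc have a sharp angle or vertex at both ends, which is exactly what excludes the degenerate intermediate arc paths that a naive trail argument might otherwise leave open. Once this uniqueness is secured, the rest of the proof is a formal consequence of Lemmas~\ref{lem:candy-same-junctions}, \ref{lem:candy-same-arcs}, \ref{lem:trail-thru-junction}, and \ref{lem:trail-across-mark}.
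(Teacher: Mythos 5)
Your proof follows essentially the same route as the paper's, which simply cites Lemma~\ref{lem:candy-same-junctions} (same junctions) and Lemma~\ref{lem:candy-same-arcs} (same arcs); your additional care about the funnel-to-junction correspondence being a bijection, the rotation at junctions, and the marked faces (via Lemma~\ref{lem:trail-across-mark}) fills in details the paper leaves implicit, using arguments in the same spirit as the paper's own proofs of those lemmas. The proposal is correct and consistent with the paper's approach.
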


\begin{proof}
This follows from Lemma~\ref{lem:candy-same-junctions} (showing that they have the same set of junctions) and Lemma~\ref{lem:candy-same-arcs} (showing that they have the same set of arcs).
\end{proof}

\end{document}